\title{On the Hardness of PAC-learning Stabilizer States with Noise}
\author{Aravind Gollakota}
\email{aravindg@cs.utexas.edu}
\author{Daniel Liang}
\email{dliang@cs.utexas.edu}
\affiliation{Department of Computer Science \\ University of Texas at Austin}
\date{January 31, 2022}
\theoremstyle{plain}
\newtheorem{theorem}{Theorem}[section]
\newtheorem{lemma}[theorem]{Lemma}
\newtheorem{corollary}[theorem]{Corollary}
\newtheorem{proposition}[theorem]{Proposition}
\theoremstyle{definition}
\newtheorem{definition}[theorem]{Definition}
\renewcommand{\th}{^\text{th}}
\newcommand{\R}{\mathbb{R}}
\newcommand{\cC}{\mathcal{C}}
\newcommand{\cP}{\mathcal{P}}
\newcommand{\F}{\mathcal{F}}
\newcommand{\E}{\mathcal{E}}
\newcommand{\Epauli}{\mathcal{E}_\text{Pauli}}
\newcommand{\X}{\mathcal{X}}
\let\phi\varphi
\DeclarePairedDelimiter{\inn}{\langle}{\rangle}
\DeclarePairedDelimiterX{\ketbra}[2]{|}{|}{%
	#1\rangle\langle#2%
}
\DeclareMathOperator*{\ex}{\mathbb{E}}
\DeclareMathOperator*{\pr}{\mathbb{P}}
\DeclareMathOperator{\trc}{Tr}
\DeclareMathOperator{\poly}{poly}
\DeclareMathOperator{\sda}{SDA}
\DeclareMathOperator{\sgn}{sign}
\let\hat\widehat
\begin{document}

\maketitle

\begin{abstract}
	We consider the problem of learning stabilizer states with noise in the Probably Approximately Correct (PAC) framework of \citet{aaronson2007learnability} for learning quantum states. In the noiseless setting, an algorithm for this problem was recently given by \citet{rocchetto2018stabiliser}, but the noisy case was left open. Motivated by approaches to noise tolerance from classical learning theory, we introduce the Statistical Query (SQ) model for PAC-learning quantum states, and prove that algorithms in this model are indeed resilient to common forms of noise, including classification and depolarizing noise. We prove an exponential lower bound on learning stabilizer states in the SQ model. Even outside the SQ model, we prove that learning stabilizer states with noise is in general as hard as Learning Parity with Noise (LPN) using classical examples. Our results position the problem of learning stabilizer states as a natural quantum analogue of the classical problem of learning parities: easy in the noiseless setting, but seemingly intractable even with simple forms of noise.
\end{abstract}

\section{Introduction}
A fundamental task in quantum computing is that of learning a description of an unknown quantum state $\rho$. Traditionally this is formalized as the problem of quantum state tomography, where we are granted the ability to form multiple copies of $\rho$ and take arbitrary measurements, and must learn a state $\sigma$ that is close to $\rho$ in trace distance. In an influential work, \citet{aaronson2007learnability} introduced the ``Probably Approximately Correct'' (PAC) framework from computational learning theory \cite{valiant1984theory} as an alternative perspective on this problem. Here the key innovation is that instead of learning $\rho$ in an absolute metric (such as trace distance), we only wish to learn it with respect to a pre-specified distribution on measurements. This requirement is considerably weaker than that of full tomography.

Concretely, let $\E$ denote the space of two-outcome $n$-qubit measurements $E$, where $E$ (corresponding to the POVM $\{E, I - E\}$) accepts a state $\rho$ with probability $\trc(E \rho)$ and rejects it otherwise. Let $D$ be a distribution on $\E$. We are given the ability to sample and perform random measurements from $D$, i.e.\ to form a training set $\{(E_i, Y_i)\}_{i \in [m]}$ where each $E_i \sim D$ and $Y_i$ is the outcome of measuring $\rho$ using $E_i$. Our goal is to learn a state $\sigma$ such that with high probability, \[ \ex_{E \sim D} \left[ (\trc(E \sigma) - \trc(E \rho))^2 \right] \leq \epsilon. \] Usually we also have some basic prior information about $\rho$, such as knowledge of a class $\F$ such that $\rho \in \F$. In the terminology of learning theory, this would be called ``learning the class $\F$.'' It is important to stress that in this framework, while the data \emph{arises from} a quantum state, it is entirely classical in form and representation.

When $\rho$ is an arbitrary $n$-qubit state, \cite{aaronson2007learnability} showed that $O(n)$ ``training examples'', i.e.\ measurement-outcome pairs $(E, \trc(E \rho))$,\footnote{Or more realistically $(E, Y)$, where $Y$ is the random outcome of measuring $\rho$ using $E$.} suffice to statistically determine a state $\sigma$ that approximates $\rho$ in this sense (suppressing the dependence on $\epsilon$ for simplicity). This is in contrast to full state tomography, which requires an exponential number of measurements or copies of $\rho$ \cite{o2016efficient, Haah_2017}. Note, however, that here $\sigma$ is determined purely in a statistical or information-theoretic sense, and for a long time no efficient algorithms were known for actually computing $\sigma$ in settings of interest.

Recently, \citet{rocchetto2018stabiliser} gave an efficient algorithm for the case where $\rho$ is known to be a stabilizer state, and $D$ is any distribution on Pauli measurements. The stabilizer states are an important class of states in quantum computing due to their efficient classical simulability \cite{gottesman1998heisenberg, aaronson2004improved} and their foundational role in quantum error-correction \cite{gottesman1997stabilizer}. A major question left open by \cite{rocchetto2018stabiliser} is: are stabilizer states also efficiently learnable in noisy settings?

Motivated by this question, we introduce to the quantum setting a well-known tool for noise-resilient classical PAC learning, the \emph{statistical query} (SQ) model, and define the problem of SQ-learning quantum states. In this model, rather than receiving labeled measurement-outcome examples of the form $(E, \trc(E\rho))$, the learner is only allowed to make statistical queries to an oracle, and otherwise its goal remains the same. A statistical query is described by a function $\phi : \E \times \{-1, 1\} \to [-1, 1]$ and a tolerance $\tau > 0$, and the oracle responds to the query with $\ex[\phi(E, Y)] \pm \tau$, where the expectation is taken over the random draw of $E \sim D$ and $Y$, the random outcome of measuring $\rho$ using $E$. One can think of this as modeling an experimental setup that is unable to report individual measurement outcomes, but is nevertheless able to estimate expectation values to any desired accuracy. Importantly, an algorithm that is able to work in this restricted setting automatically gains tolerance to several kinds of noise.

The SQ model was originally introduced by \citet{kearns1998efficient} in the setting of Boolean function classes, and has since grown into a highly influential model (see \cite{feldman2016, reyzin2020statistical} for surveys). The model is known to have the following properties: \begin{itemize}
	\item It is a natural restriction of the PAC model that nevertheless captures most known PAC algorithms for a wide range of common classes \cite{hellerstein2007pac, reyzin2020statistical}.  
	\item SQ algorithms are naturally resistant to mild forms of noise in the labels, such as ``classification noise'', where the label for each training example is flipped with some constant probability \cite{kearns1998efficient}.
	\item It is the most realistic learning model for which strong, unconditional lower bounds are known for many basic classes. Indeed, there is a considerable literature on this topic, with lower bounds usually proven using the so-called SQ-dimension and its generalizations \cite{blum1994weakly, feldman2012complete, reyzin2020statistical}.
	\item SQ algorithms are naturally implementable in a way that satisfies differential privacy of the training data, and indeed are the main examples of realistic differentially private learning algorithms \cite{blum2005practical, dwork2014algorithmic}.
\end{itemize}

Given all of these properties, it is natural to wonder whether the SQ model has something to bring to quantum learnability, with a particular eye towards noise tolerance. In this work we show (among other results) that for stabilizer states this approach cannot work: SQ-learning stabilizer states is exponentially hard, and in general, learning stabilizer states with noise is as hard as the well-known Learning Parity with Noise (LPN) problem.

\begin{theorem}
	Let $D$ denote the uniform distribution on Pauli measurements. Any SQ algorithm for learning $n$-qubit stabilizer states under $D$ up to error $2^{-O(n)}$ (i.e.\ to significantly outperform the maximally mixed state) requires $2^{\Omega(n^2)}$ queries even when tolerance is $2^{-O(n^2)}$.
\end{theorem}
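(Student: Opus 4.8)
The plan is to recast the problem in the statistical-query framework and lower bound the statistical dimension of the associated concept class, then invoke a standard SQ lower bound in the form using the statistical dimension with average correlation (following \citet{blum1994weakly, feldman2012complete}).

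\emph{Step 1 (setup and reduction to identification).} To an $n$-qubit stabilizer state $|\psi\rangle$ associate the function $f_\psi$ on the $4^n$ Pauli observables given by $f_\psi(P) = \trc(P\,|\psi\rangle\langle\psi|) = \langle\psi|P|\psi\rangle$; it takes values $\pm1$ on the $2^n$ Paulis in the stabilizer group of $|\psi\rangle$ and $0$ elsewhere, so $f_\psi$ is a $2^n$-sparse $\{-1,0,1\}$-vector. Under the uniform distribution $D$ on Pauli measurements, a statistical query $\phi(E,Y)$ is, up to a state-independent shift, an evaluation of the bounded linear functional $a \mapsto \ex_P[\,a(P)\,\langle\psi|P|\psi\rangle\,]$, where $a(P) = \tfrac12(\phi(E_P,1) - \phi(E_P,-1)) \in [-1,1]$ and $E_P$ is the measurement associated with $P$; the ``maximally mixed'' answer corresponds to $a$ supported only on the identity Pauli. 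I would also record the reduction from learning to identification: distinct stabilizer states are $\Omega(2^{-n})$ apart in the squared-error metric $\ex_E[(\trc(E\sigma)-\trc(E\rho))^2]$, and the maximally mixed state is itself $\Theta(2^{-n})$ from every stabilizer state, so an SQ learner achieving error $2^{-O(n)}$ must in fact identify the unknown stabilizer state among all $N = 2^{\Theta(n^2)}$ of them.

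\emph{Step 2 (correlation structure via the design property).} The Pauli expansion of density matrices gives $\langle f_\psi, f_{\psi'}\rangle = 2^{-n}\,|\langle\psi|\psi'\rangle|^2$ for the inner product under the uniform distribution on Paulis, together with three facts I would use: (i) for distinct $\psi,\psi'$ the overlap $|\langle\psi|\psi'\rangle|^2$ is $0$ or $2^{-k}$ for some $1 \le k \le n$, in particular at most $\tfrac12$; (ii) the number of stabilizer states $\psi'$ with $|\langle\psi|\psi'\rangle|^2 \ge 2^{-k}$ is at most roughly $2^{kn}$; and (iii) averaged over all $\psi'$ the overlap is exactly $2^{-n}$. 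Since stabilizer states form a (complex projective) $2$-design, one has the clean second-moment identity $\tfrac1N\sum_\psi \langle\psi|H|\psi\rangle^2 = \tfrac{\trc(H)^2 + \trc(H^2)}{2^n(2^n+1)}$ for Hermitian $H$; applied with $H = \sum_P a(P)P$ this bounds, uniformly over queries $a$, the average squared discrimination $\ex_\psi[\langle a, f_\psi\rangle^2]$, and shows that the Gram operator of the recentered concept functions is, up to scaling, a projector of rank $4^n-1$ --- the quantitative ``average correlation'' control that the SQ machinery needs.

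\emph{Step 3 (lower bounding the statistical dimension --- the crux).} I would show that the statistical dimension with average correlation, $\sda$, of the identification problem is $2^{\Omega(n^2)}$ for a suitable parameter $\bar\gamma$. Fix any subset $\mathcal S'$ of stabilizer states with $|\mathcal S'| \ge N/d$. Its average (recentered, absolute) correlation splits into a diagonal term of order $1/|\mathcal S'|$ and an off-diagonal term which, by fact (ii), is a geometric sum over $k$ dominated by its first terms and is at most $\poly(n)\cdot|\mathcal S'|^{-1/n} + 2^{-n}$. Choosing $d = 2^{cn^2}$ for a sufficiently small constant $c$ forces $|\mathcal S'| \ge 2^{\Omega(n^2)}$ and hence makes both terms small, so $\sda(\bar\gamma) \ge d = 2^{\Omega(n^2)}$. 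Feeding this into the general bound --- any SQ algorithm for a problem of statistical dimension $d$ needs $\Omega(d^{1/3})$ queries even at tolerance $d^{-1/3}$ --- yields $2^{\Omega(n^2)}$ queries even at tolerance $2^{-O(n^2)}$, and Step 1's reduction transfers this to learning stabilizer states to error $2^{-O(n)}$.

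The difficulty is concentrated in Step 3: one must control the average correlation for \emph{every} large subset of stabilizer states, not just a typical or algebraically structured one, and must line up the parameters so that the resulting dimension is genuinely $2^{\Omega(n^2)}$ rather than the $2^{\Omega(n)}$ that falls out of naive (strong) SQ-dimension arguments --- the quadratic gain comes precisely from there being $2^{\Theta(n^2)}$ stabilizer states whose pairwise overlaps satisfy the sharp counting bound (ii), and turning that count into a clean bound on $\sda$ is where the real work lies.
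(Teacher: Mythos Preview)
Your approach is correct and lands in the same SQ-dimension framework as the paper, but you are making Step~3 far harder than it needs to be. The paper's proof uses \emph{only} your fact~(i): for distinct stabilizer states $|\langle\psi|\psi'\rangle|^2 \le \tfrac12$, hence $|\langle f_\psi, f_{\psi'}\rangle_D| \le 2^{-(n+1)}$ uniformly over all pairs. With this uniform pairwise bound and $\|f_\psi\|_D^2 = 2^{-n}$, the standard lemma (\cref{lemma:sda} in the paper) gives in one line
\[
\sda\!\left(\cC,\, 2^{-(n+1)} + \gamma'\right) \;\ge\; |\cC|\,\frac{\gamma'}{2^{-n} - 2^{-(n+1)}} \;=\; 2|\cC|\,\gamma',
\]
so taking $\gamma' = 2^{-(n+1)}$ already yields $\sda(\cC, 2^{-n}) \ge |\cC| = 2^{\Theta(n^2)}$. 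The counting bound~(ii), the 2-design identity, and the geometric sum over overlap levels are all unnecessary --- the ``real work'' you flag in Step~3 dissolves completely once you use the uniform pairwise bound you already stated. Relatedly, your Step~1 reduction to exact identification is also not needed: the p-concept SQ lower bound (\cref{thm:sq_lower_bound}) applies directly to learning up to squared loss $\epsilon$ as long as $\epsilon^2 \le \|f_\psi\|_D^2/3 = 2^{-n}/3$, which is exactly the $2^{-O(n)}$ regime in the theorem.

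What your more elaborate route would buy, in principle, is robustness: the design/counting machinery could handle classes where a uniform pairwise bound fails and one must control averages instead. But for stabilizer states under uniform Pauli measurements the pairwise bound is both tight and uniform, so the simple argument suffices and is what the paper does.
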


Define a \emph{parity measurement} to be a Pauli measurement of the form $E_x = \frac{P_x + I}{2}$ for some $x \in \{0, 1\}^n$, where $P_x = \sum_{y \in \{0, 1\}^n} \chi_x(y) \ketbra{y}{y}$ and $\chi_x(y) = (-1)^{x \cdot y \bmod 2}$. They are so named since for any computational basis state $\ketbra{y}{y}$, $\trc(E_x \ketbra{y}{y}) = x \cdot y \bmod 2$. The following theorems hinge on the observation (stated as \cref{prop:parity_hardness}) that parities can be very naturally embedded within the problem of learning stabilizer states under distributions on parity measurements.
\begin{theorem}
	Let $D'$ denote the uniform distribution on parity measurements. Any SQ algorithm for learning $n$-qubit stabilizer states under $D'$ even up to constant error (say $1/3$) requires $2^{\Omega(n)}$ queries even when tolerance is $2^{-O(n)}$. 
\end{theorem}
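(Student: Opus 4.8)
\emph{Proof plan.} The strategy is to reduce directly from classical statistical-query learning of parities under the uniform distribution, via the embedding recorded in \cref{prop:parity_hardness}. Recall that every computational basis state $\ketbra{y}{y}$ is a stabilizer state and that a parity measurement $E_x$ has a deterministic outcome on it, namely $\trc(E_x\ketbra{y}{y}) \in \{0,1\}$, which is an affine function of the parity $x \cdot y \bmod 2$. Hence, if the unknown stabilizer state is $\ketbra{y}{y}$ and $D = D'$, a labeled example $(E_x, Y)$ is precisely a labeled example $(x, f_y(x))$ for the Boolean function $f_y(x) := \trc(E_x\ketbra{y}{y})$ with $x \sim \unif(\{0,1\}^n)$, and as $y$ ranges over $\{0,1\}^n \setminus \{0^n\}$ the family $\{f_y\}$ coincides, up to relabeling the output, with the class of nonconstant parities on $n$ bits.

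First I would check that the statistical-query oracle simulates through cleanly: a query $\phi : \E \times \{-1,1\} \to [-1,1]$ to the quantum oracle, evaluated only at parity measurements, is literally the classical query $\psi(x, b) := \phi(E_x, b)$, and both oracles return $\ex[\psi(x, f_y(x))] \pm \tau$ with the \emph{same} tolerance $\tau$ (there is no label noise to account for here, the measurement outcome being deterministic on a basis state). Thus a $q$-query, tolerance-$\tau$ SQ learner for stabilizer states under $D'$ to error $\epsilon$ yields, given the parity oracle for some $f_y$, a state $\sigma$ with $\ex_x[(\trc(E_x\sigma) - f_y(x))^2] \le \epsilon$; thresholding at $1/2$ gives the Boolean hypothesis $h(x) := \mathbf{1}[\trc(E_x\sigma) \ge 1/2]$, which disagrees with $f_y$ on at most a $4\epsilon$ fraction of inputs, since every mistake contributes at least $1/4$ to the squared loss. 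For $\epsilon$ a sufficiently small constant this is a weak learner for parities, with advantage $1/2 - 4\epsilon$ bounded away from $0$.

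It then remains to invoke the classical SQ lower bound for parities. The $2^n - 1$ nonconstant parities are pairwise orthogonal under the uniform distribution, so the class has SQ-dimension $2^n - 1$; consequently any SQ algorithm that weakly learns it to advantage $\gamma$ requires $\Omega(\gamma^2 2^n)$ queries even when tolerance is as small as $\gamma$ \cite{blum1994weakly, kearns1998efficient}. Taking $\gamma$ to be a small enough $2^{-O(n)}$ forces $q = 2^{\Omega(n)}$ even at tolerance $\tau = 2^{-O(n)}$, which is the claimed bound. The conceptual content of the argument is entirely in \cref{prop:parity_hardness} (the embedding) and in this classical lower bound; within the proof itself the only point needing real care is the bookkeeping that keeps the three scales --- target error $\epsilon$, weak-learning advantage $\gamma$, and tolerance $\tau$ --- compatible, so that $\tau = 2^{-O(n)}$ survives the reduction while $q$ remains $2^{\Omega(n)}$. (We also quietly discard the target $\ketbra{0^n}{0^n}$, which corresponds to the constant function and does not affect the count $2^n - 1$.)
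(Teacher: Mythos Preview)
Your proposal is correct and takes essentially the same approach as the paper: reduce via \cref{prop:parity_hardness} to classical SQ-learning of parities under the uniform distribution, then invoke Kearns' lower bound (\cref{thm:kearns_parity}). The paper's own proof is a two-line appeal to exactly these two ingredients; you have simply spelled out the mechanics of the reduction (the oracle simulation and the thresholding of $\trc(E_x\sigma)$ to extract a Boolean hypothesis) that the paper leaves implicit in the phrase ``different representations of the same problem.''
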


\begin{theorem}
	Let $D'$ be as above. Learning $n$-qubit stabilizer states under $D'$ with classification noise at rate $\eta$ is at least as hard as the classical problem of Learning Parity with Noise (LPN) at rate $\eta$.
\end{theorem}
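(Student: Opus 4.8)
The plan is to give a polynomial-time reduction from LPN at rate $\eta$ to the problem of learning $n$-qubit stabilizer states under $D'$ with classification noise at rate $\eta$, building on the embedding of parities supplied by \cref{prop:parity_hardness}. The correspondence I would use is this: an LPN secret $s \in \{0,1\}^n$ is identified with the computational basis state $\rho = \ketbra{s}{s}$, which is in particular a stabilizer state, and an LPN query vector $x \in \{0,1\}^n$ is identified with the parity measurement $E_x = (P_x + I)/2$. Since $\trc(E_x \ketbra{s}{s}) = x \cdot s \bmod 2 \in \{0,1\}$ is a \emph{deterministic} function of $x$ and $s$, the noiseless measurement outcome of $E_x$ on $\rho$ is exactly $x \cdot s \bmod 2$, so corrupting it with classification noise at rate $\eta$ yields the bit $x \cdot s \oplus e$ with $e \sim \mathrm{Bernoulli}(\eta)$ --- precisely an LPN sample for secret $s$ at noise rate $\eta$.

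Concretely, given an LPN oracle at rate $\eta$ with secret $s$, I would simulate the classification-noisy quantum PAC oracle for $(\rho = \ketbra{s}{s}, D')$ by drawing each LPN sample $(x, b)$ and forwarding the example $(E_x, b)$ (translating $b$ into whichever $\{0,1\}$ or $\{\pm 1\}$ label encoding the quantum oracle expects). Because $x$ is uniform over $\{0,1\}^n$, the measurement $E_x$ is uniform over parity measurements, so this faithfully simulates the promised oracle, and feeding these examples to the hypothesized learning algorithm returns a hypothesis state $\sigma$ (whether proper or improper is irrelevant) with $\ex_{x \sim \unif}\big[(\trc(E_x\sigma) - \trc(E_x\rho))^2\big] \le \epsilon$.

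It then remains to decode $\sigma$ into a solution for LPN. I would set $h(x) = \mathbf{1}[\trc(E_x\sigma) \ge 1/2]$, a Boolean guess for $x \cdot s \bmod 2$: whenever $h$ errs on $x$, we have $|\trc(E_x\sigma) - \trc(E_x\rho)| \ge 1/2$, so the per-point squared loss is at least $1/4$, and Markov's inequality gives $\pr_x[h(x) \ne x\cdot s] \le 4\epsilon$. Thus $h$ predicts the parity $x \mapsto x\cdot s \bmod 2$ under the uniform distribution with advantage $1/2 - 4\epsilon$ over random guessing, which for any sufficiently small constant $\epsilon$ is a positive constant. That such a predictor yields an LPN solver is standard: for the decision version one draws fresh noisy samples $(x_i,b_i)$ and tests the empirical rate at which $h(x_i) = b_i$, which is bounded away from $1/2$ on the true distribution but equals $1/2$ when the labels are uniform; and if one instead wants the secret $s$ itself, self-correction of the Hadamard code recovers it from any such $h$ (error below $1/4$) together with additional noisy samples. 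Since the reduction uses $\poly(n)$ samples and time throughout, a sub-exponential-time stabilizer learner would break LPN.

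The part needing the most care is the bookkeeping around noise: I would have to verify that ``classification noise at rate $\eta$'' applied to the quantum measurement outcome --- which is deterministic precisely because $\rho$ is a computational basis state --- coincides exactly with the LPN noise model, keep the $\{0,1\}$ versus $\{\pm1\}$ label conventions straight, and confirm that the accuracy and confidence parameters compose into a $\poly$-time LPN solver (noting that the stabilizer learner only needs to succeed for some small constant $\epsilon$). Everything else --- invoking \cref{prop:parity_hardness}, the thresholding-plus-Markov step, and the folklore reduction from weak uniform-distribution noisy parity learning to LPN --- should be routine.
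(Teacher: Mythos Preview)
Your proposal is correct and takes essentially the same approach as the paper: both arguments rest on \cref{prop:parity_hardness}, identifying the LPN secret $s$ with the computational basis state $\ketbra{s}{s}$ and LPN queries with parity measurements, so that noisy LPN samples and classification-noisy quantum examples coincide. The paper's proof of this corollary is a one-line invocation of \cref{prop:parity_hardness}; your write-up is simply a more explicit version, and in particular your thresholding-plus-Markov step to convert an arbitrary hypothesis state $\sigma$ into a Boolean parity predictor makes rigorous a detail the paper leaves implicit.
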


(These theorems are formally stated as \cref{cor:stab-maxmixed-sq,cor:stab-sq-lower,cor:stab-lpn} respectively.)

Our results position the problem of learning stabilizer states as a quantum analogue of the important classical problem of learning parities.\footnote{This recalls the way in which stabilizer codes are a quantum analogue of classical parity check codes.} In both cases there are simple ``algebraic'' learning algorithms for the noiseless setting, and the problem seems to become intractable with even the simplest kinds of noise. The algorithm of \citet{rocchetto2018stabiliser} thus joins a small class of PAC algorithms that do not fall into the SQ model, and hence do not admit any straightforward algorithms in noisy settings. In our view, this frames learning stabilizer states with noise as one of the more compelling problems on the frontier of learning quantum states with noise.

Another interpretation of our results is that they highlight limitations of the PAC framework of \citet{aaronson2007learnability}: insofar as this framework reduces the problem of learning quantum states to an essentially classical problem, it also inherits longstanding problems from classical learning theory. 

We also hope that our introduction of the SQ model to quantum state learning will be of independent interest and help spur new ideas in this area.

We now detail the rest of our contributions and lay out the organization of this paper:
 \begin{itemize}
	\item In \cref{sec:prelim}, we formally define the problem of SQ-learning quantum states and extend the notion of the SQ-dimension to this setting, building on recent work that formally analyzed the SQ-dimension as applicable to the p-concept setting \cite{goel2020superpolynomial}.
	\item In \cref{sec:sq-noise-tolerance}, we show that SQ algorithms for learning quantum states are indeed resistant to mild forms of noise, including classical classification noise as well as quantum channels with bounded noise (such as depolarizing noise).
	\item In \cref{sec:sq_lower}, we give exponential SQ lower bounds on learning stabilizer states. Under the uniform distribution on Pauli measurements, we show (Corollary \ref{cor:stab-maxmixed-sq}) that it requires exponentially many queries in order to improve on the maximally mixed state's performance. Under a different natural distribution on Pauli measurements, namely the uniform distribution over parity measurements, we show (Corollaries \ref{cor:stab-sq-lower}, \ref{cor:stab-lpn}) that learning stabilizer states with noise is as hard as learning parities with noise. 
	\item In \cref{sec:sq-learner}, by way of positive results, we give SQ algorithms for the simple setting of learning product states. We describe an SQ algorithm for learning product states under Haar-random single-qubit measurements, and show that it allows one to perform tomography on the individual qubits.
	\item In \cref{sec:diff-priv}, we relate SQ learning to a form of  differential privacy for quantum state learners. This form of differential privacy has recently been studied by \cite{arunachalam2021private}.
\end{itemize}

\subsection{Related work}
The problem of learning quantum states via state tomography has a long history in quantum computing, culminating in the celebrated optimal algorithms of O'Donnell and Wright \cite{o2016efficient, o2017efficient} and \citet{Haah_2017}. We operate in the alternative PAC framework introduced by \citet{aaronson2007learnability}. In recent years, this framework has been extended to the online setting \cite{aaronson2019online} as well as verified in experimental setups \cite{rocchetto2019experimental}. To our knowledge, the only known computationally efficient PAC learners for supervised learning of a commonly-considered class of states are the algorithm of \citet{rocchetto2018stabiliser} for learning stabilizer states, as well as that of \citet{yoganathan2019condition} for other classes of states whose generating circuits can be efficiently classically simulated and inverted, including low Schmidt rank states. While the focus of this paper is on stabilizer states, we remark that Yoganathan's algorithm for low Schmidt rank states also involves solving a system of polynomial equations in the examples, and hence would also not admit any straightforward SQ implementation. \citet{cheng2015learnability} frame the problem of PAC-learning unknown quantum measurements under a distribution of states as a dual problem to PAC-learning an unknown state, and are able to recover Aaronson's main sample complexity bound using a classical proof.

Other examples of efficient learning algorithms not lying strictly within the PAC formalism include \citet{aaronson2021efficient} in the unsupervised setting for non-interacting fermions and \citet{montanaro2017learning} for improved tomography of stabilizer states. \citet{lai2021learning} extend Montanaro's ideas for stabilizer states and use it to learn outputs of Clifford circuits with a single layer of $T$ gates.

Recent work by \citet{arunachalam2021private} extends work by \citet{bun2020equivalence} to the quantum setting, and relates differentially private (DP) learning of quantum states to one-way communication, online learning, and other models. We show in \cref{sec:diff-priv} that our notion of SQ learnability implies their notion of DP learnability, and hence by their results also implies finite sequential fat-shattering dimension, online learnability, and ``quantum stability.''

We note that the problem of PAC-learning quantum states is very different from the problem of PAC-learning Boolean functions using quantum representations of data, as considered in a recent active line of work \cite{arunachalam2018optimal, arunachalam2017guest}. In particular, the model of SQ-learning that we introduce is unrelated to a recent notion of SQ-learning of \emph{Boolean} functions using quantum representations \cite{arunachalam2020quantum}. When one is given quantum samples of Boolean or integer-valued functions, there have been important results on learning in the presence of noise, showing that both Learning Parity with Noise (LPN) \cite{cross2015quantum} and Learning With Errors (LWE) \cite{grilo2019learning} are tractable in this setting.

\section{Preliminaries}\label{sec:prelim}

\paragraph{Notation and terminology.} We use $\rho$ to refer to the density matrix of an $n$-qubit quantum mixed state, representable as a $2^n \times 2^n$ PSD operator of trace 1. (The number of qubits will be $n$ throughout and suppressed from the notation.) A pure state is a quantum state with rank $1$. Let $\E$ denote the space of two-outcome $n$-qubit measurements $E$ (corresponding to the POVM $\{E, I-E\}$), which accept a state $\rho$ with probability $\trc(E \rho)$. We will view the measurement outcomes themselves as $\{-1, 1\}$-valued, so that the outcome of measuring $\rho$ using $E$ is a random variable $Y$ that is $1$ with probability $\trc(E\rho)$ and $-1$ otherwise. Define $f_\rho : \E \to [-1, 1]$ to be the conditional mean function \[ f_\rho(E) = \ex[Y|E] = 1 \cdot \trc(E \rho) + (-1) \cdot (1 - \trc(E\rho)) = 2\trc(E\rho) - 1. \] We will often identify a state $\rho$ with its behavior with respect to two-outcome measurements, namely with the function $f_\rho$, and use the notation $Y \sim f_\rho(E)$ to mean that $Y \in \{-1, 1\}$ is the random measurement outcome satisfying $\ex[Y|E] = f_\rho(E)$. In learning theoretic terms, this means $f_\rho$ describes a \emph{probabilistic concept}, or p-concept, on the space $\E$. A p-concept on a domain $\X$ is a classification rule that assigns random $\{-1, 1\}$-valued labels to each point in $\X$ according to a fixed conditional mean function; we always identify the p-concept with its conditional mean function. Given a set $\F$ of quantum states, we use $\F$ to also mean the class of associated p-concepts, with the meaning clear from context.

Given a distribution $D$ over $\E$, we will often regard functions $f_\rho, f_\sigma : \E \to \R$ as members of the $L^2$ space $L^2(D, \E)$, with the inner product given by $\inn{f_\rho, f_\sigma}_D = \ex_{E \sim D}[f_\rho(E) f_\sigma(E)]$, and the norm given by $\|f_\rho\|_D = \sqrt{\inn{f_\rho, f_\rho}_D} = \sqrt{\ex_{E \sim D} [(f_\rho(E))^2]}$.

We use $[n]$ to refer to the set of indices $\{1, \dots, n\}$. Given a set $S \subseteq [n]$, we will use $\chi_S$ to refer to the parity on $S$, defined as a function from $\{-1, 1\}^n \to \{-1, 1\}$ by $\chi_S(x) = \prod_{i \in S} x_i$. Given $x, y \in \{0, 1\}^n$, we will sometimes also use $\chi_x(y) = (-1)^{x \cdot y \bmod 2}$.

\subsection{Learning models}
We begin by formally defining the problem of PAC-learning a quantum state.
\begin{definition}[PAC-learnability of quantum states, \cite{aaronson2007learnability}]
	Let $\F$ be a class of $n$-qubit quantum states. Let $D$ be a distribution over $\E$. We say $\F$ is PAC-learnable up to squared loss $\epsilon$ with respect to $D$ if there exists a learner that, given sample access to labeled examples $(E, Y)$ for $E \sim D, Y \sim f_\rho(E)$ for an unknown $\rho \in \F$, is able to output a state $\sigma$ satisfying \[ \ex_{E \sim D} \left[ (f_\sigma(E) - f_\rho(E))^2 \right] \leq \epsilon. \]
	The number of examples used by the learner is called its sample complexity.
\end{definition}
We note that this is a slight modification of the original definition in \cite{aaronson2007learnability}, stated directly in terms of squared loss since this is the view that will be convenient for us. PAC learners are also allowed to fail with some probability $\delta$, but for simplicity we will ignore this in this paper. A learner that only succeeds with some constant probability can easily be amplified to succeed with probability $1 - \delta$ using standard confidence amplification procedures.

With PAC learners, one may speak of both computational efficiency (overall running time) and statistical or information-theoretic efficiency (sample complexity). The original result of \citet{aaronson2007learnability} described a computationally inefficient algorithm for learning arbitrary states that nevertheless had $O(n)$ sample complexity. An \emph{efficient} PAC learner is one that is computationally efficient, i.e.\ runs in polynomial time, and hence also draws at most polynomially many examples (each draw is considered as taking one unit of time).

We now introduce the following natural extension of these definitions to the SQ setting. In both cases, we operate in the so-called distribution-specific setting, where the learner is assumed to have knowledge of the distribution $D$.

\begin{definition}[SQ-learnability of quantum states]
	Let $\F$ be a class of $n$-qubit quantum states. Let $D$ be a distribution over $\E$. An SQ oracle for an unknown state $\rho \in \F$ is an oracle that accepts a query and a tolerance, $(\phi, \tau)$, where $\phi : \E \times \{-1, 1\} \to [-1, 1]$ and $\tau > 0$, and responds with $y$ such that \[ \Big\lvert y - \ex_{E \sim D, Y \sim f_\rho(E)} \left[ \phi(E, Y) \right] \Big\rvert \leq \tau. \] We say $\F$ is SQ-learnable up to squared loss $\epsilon$ if there is a learner that, given only queries to the SQ oracle for an unknown $\rho \in \F$, is able to output a state $\sigma$ satisfying \[ \ex_{E \sim D} \left[ (f_\sigma(E) - f_\rho(E))^2 \right] \leq \epsilon. \] The number of queries used by the learner is called its query complexity.
\end{definition}

An SQ learner is considered \emph{efficient} if it uses polynomially many queries and its queries all have tolerance $\tau \geq 1/\poly(n)$.

\subsection{SQ lower bounds for p-concepts}
One of the chief features of the classical SQ model is the possibility of proving unconditional lower bounds on learning a class $\cC$ in terms of its so-called statistical dimension. The quantum setting that we work in, where we identify a state $\rho$ with the p-concept $f_\rho$, becomes a special case of the SQ model for learning p-concepts. Building on recent work \cite{goel2020superpolynomial} that formally proved SQ lower bounds for p-concepts, we extend this framework to the quantum setting. Let $\X$ denote an arbitrary domain (for us, $\X$ will be $\E$, while in the classical setting, $\X$ is usually $\R^n$). 

\begin{definition}[Statistical dimension]
	Let $D$ be a distribution on $\X$, and let $\cC$ be a class of functions from $\X$ to $\R$. The  \emph{average (un-normalized) correlation} of $\cC$ is defined to be $ \rho_D(\cC) = \frac{1}{|\cC|^2} \sum_{c, c' \in \cC} |\inn{c, c'}_D|$.
	The \emph{statistical dimension on average} at threshold $\gamma$, $\sda_D(\cC, \gamma)$, is the largest $d$ such that for all $\cC' \subseteq \cC$ with $|\cC'| \geq |\cC|/d$, $\rho_D(\cC') \leq \gamma$.
\end{definition}

\begin{theorem}[\cite{goel2020superpolynomial}, Cor.\ 4.6]\label{thm:sq_lower_bound}
	Let $D$ be a distribution on $\X$, and let $\cC$ be a p-concept class on $\X$. Say our queries are of tolerance $\tau$, the final desired squared loss is $\epsilon$, and that the functions in $\cC$ satisfy $\|c\|_D \geq \beta$ for all $c \in \cC$. For technical reasons, we require $\tau \leq \epsilon$,  $\epsilon^2 \leq \beta/3$. Then learning $\cC$ up to squared loss $\epsilon$ (we may pick $\epsilon$ as large as $\sqrt{\beta/3}$) requires at least $\sda_D(\cC, \tau^2)$ queries of tolerance $\tau$.
\end{theorem}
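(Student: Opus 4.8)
The plan is to run the standard statistical-query lower bound via the statistical dimension, specialized to the p-concept setting, the idea being to show that each statistical query, and separately the learner's final hypothesis, can only ``eliminate'' a small fraction of $\cC$ --- small as measured by $\sda_D(\cC,\tau^2)$ --- so that a learner making fewer than $\sda_D(\cC,\tau^2)$ queries remains consistent with a large subclass of $\cC$ on which its output is forced, and therefore errs on some member of that subclass.

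First I would record the decomposition of a query. Given $\phi:\X\times\{-1,1\}\to[-1,1]$, set $\phi_{\pm}(x)=\phi(x,\pm 1)$ and $g=\tfrac12(\phi_++\phi_-)$, $h=\tfrac12(\phi_+-\phi_-)$, so that $\phi(x,Y)=g(x)+Y\,h(x)$ with $\|h\|_\infty\le 1$ (hence $\|h\|_D\le 1$), and for every p-concept $c$,
\[ \ex_{x\sim D,\,Y\sim c(x)}\big[\phi(x,Y)\big] \;=\; \ex_D[g] + \inn{h,c}_D . \]
Thus a query value differs from the ``trivial'' value $\ex_D[g]$ --- the answer for the conditional-mean-$0$ p-concept, i.e.\ a uniformly random label --- only through the correlation $\inn{h,c}_D$. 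The adversarial oracle always returns $\ex_D[g]$; this answer lies within tolerance $\tau$ of the true value for a target $c$ exactly when $\abs{\inn{h,c}_D}\le\tau$.

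Next I would bound the ``kill set'' $\cC_\phi:=\{c\in\cC:\abs{\inn{h,c}_D}>\tau\}$. Write $\cC_\phi=\cC_\phi^+\cup\cC_\phi^-$ according to the sign of $\inn{h,c}_D$, and let $\bar c$ be the mean of the functions in $\cC_\phi^+$; then $\inn{h,\bar c}_D=\ex_{c\in\cC_\phi^+}[\inn{h,c}_D]>\tau$, so Cauchy--Schwarz with $\|h\|_D\le 1$ gives $\|\bar c\|_D>\tau$ and hence $\rho_D(\cC_\phi^+)\ge\|\bar c\|_D^2>\tau^2$. By the definition of $d:=\sda_D(\cC,\tau^2)$ this forces $\abs{\cC_\phi^+}<\abs{\cC}/d$, and likewise for $\cC_\phi^-$, so $\abs{\cC_\phi}<2\abs{\cC}/d$. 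Running the learner against the all-trivial oracle, after $q$ queries at most $2q\abs{\cC}/d$ concepts have ever been eliminated. An essentially identical averaging argument --- now invoking $\|c\|_D\ge\beta$ together with the technical conditions $\tau\le\epsilon$ and $\epsilon^2\le\beta/3$ --- shows that the learner's output $\sigma$ satisfies $\|f_\sigma-c\|_D^2\le\epsilon$ for at most $\abs{\cC}/d$ concepts $c\in\cC$: any two such $c,c'$ obey $\|c-c'\|_D\le 2\sqrt\epsilon$ and hence $\inn{c,c'}_D\ge\beta^2-2\epsilon$, which exceeds $\tau^2$ under the stated conditions, so this ``$\epsilon$-close set'' has average correlation $>\tau^2$ and thus size $<\abs{\cC}/d$. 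Finally, if $q<(d-1)/2$ then the union of all the kill sets with the $\epsilon$-close set does not exhaust $\cC$, so some $c^\ast\in\cC$ is never eliminated and is not $\epsilon$-approximated by $\sigma$. Since $c^\ast$ is never eliminated, the all-trivial oracle is a legitimate SQ oracle for $c^\ast$, and against it the learner outputs $\sigma$, which fails to learn $c^\ast$; hence no correct learner can succeed with fewer than $\Omega(\sda_D(\cC,\tau^2))$ queries.

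The step I expect to be the main obstacle is the final-hypothesis bound --- turning ``$\sigma$ has small squared loss against $c$'' into ``$c$ lies in a subclass of average correlation $>\tau^2$.'' This is exactly where the numerical conditions are needed: they guarantee that a squared-loss-$\epsilon$ hypothesis is still correlated with the true concept strongly enough, relative to $\tau$ and $\beta$, to be caught by the statistical dimension at threshold $\tau^2$, and the argument closes only for $\epsilon$ restricted in this way. One must also be careful that $\sigma$ (equivalently $f_\sigma$) need not lie in $\cC$ or even be a p-concept, so only bounds of the form $\|f_\sigma\|_D\le 1$ are available for it; this is the delicate part of the bookkeeping. By contrast the per-query kill-set bound is routine once the decomposition $\phi=g+Y\,h$ is in hand, and a sharper accounting of the sign-split (or a randomized choice of the target from $\cC$) removes the factor of $2$ and recovers the clean ``at least $\sda_D(\cC,\tau^2)$'' of the statement; the $\Omega(\cdot)$ form already suffices for every application we make of it.
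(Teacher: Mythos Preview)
The paper does not give its own proof of this theorem; it is quoted verbatim from \cite{goel2020superpolynomial}, Cor.~4.6, with no argument supplied. So there is nothing in the paper to compare your proposal against. That said, your outline is the standard SDA-based lower-bound argument and is structurally the same as what appears in the cited reference: decompose each query as $\phi(x,Y)=g(x)+Yh(x)$, have the adversary answer with the label-free part $\ex_D[g]$, bound the kill set of each query via the mean-vector/Cauchy--Schwarz trick, and finally argue that the output hypothesis can be $\epsilon$-close to only a small subclass.

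The one place your sketch does not close is exactly the step you flag. Your pairwise bound $\inn{c,c'}_D\ge\beta^2-2\epsilon$ is correct (taking $\|c\|_D\ge\beta$ literally), but the claim that this exceeds $\tau^2$ ``under the stated conditions'' does not follow: with $\tau\le\epsilon$ and $\epsilon^2\le\beta/3$ one only gets $\beta\ge 3\epsilon^2$, hence $\beta^2-2\epsilon\ge 9\epsilon^4-2\epsilon$, which is negative for small $\epsilon$. (The same issue arises if one reads $\beta$ as a lower bound on $\|c\|_D^2$, as the paper's own application in \cref{cor:stab-maxmixed-sq} suggests.) The fix used in \cite{goel2020superpolynomial} is not to bound pairwise correlations among the $\epsilon$-close concepts, but to correlate them all against a single unit-norm test function built from $f_\sigma$ --- exactly the same mechanism as your per-query kill-set bound --- and then invoke the $\sda$ definition directly; the technical conditions $\tau\le\epsilon$, $\epsilon^2\le\beta/3$ are calibrated to make that version go through. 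Your per-query analysis is fine; to finish, replace the pairwise argument at the hypothesis step with the test-function argument and redo the arithmetic.
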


We remark that the way to interpret such a lower bound is as follows: if the SQ learner's queries have tolerance at least $\tau$, then at least $\sda_D(\cC, \tau^2)$ queries are required. That is, one must \emph{either} use small tolerance \emph{or} many queries.

The following lemma will be convenient in order to bound the $\sda$ when we have bounds on pairwise correlations. 

\begin{lemma}[\cite{goel2020superpolynomial}, Lemma 2.6]\label{lemma:sda}
	Let $D$ be a distribution on $\X$, and let $\cC$ be a p-concept class on $\X$ such that for all $c, c' \in \cC$ with $c \neq c'$, $|\inn{c, c'}_D| \leq \gamma$, and for all $c \in \cC$, $\|c\|_D^2 \leq \kappa$. Then for any $\gamma' > 0$, $\sda(\cC, \gamma + \gamma') \geq |\cC| \frac{\gamma'}{\kappa - \gamma}$.
\end{lemma}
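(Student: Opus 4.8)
The plan is to establish the bound directly from the definition of $\sda$: I will show that every subclass $\cC' \subseteq \cC$ that is not too small already has average correlation at most $\gamma + \gamma'$, and then read off the value of $d$ that this forces. So fix an arbitrary $\cC' \subseteq \cC$ and set $m = |\cC'|$. In the double sum defining $\rho_D(\cC')$ there are exactly $m$ ``diagonal'' pairs with $c = c'$ and $m^2 - m$ ``off-diagonal'' pairs with $c \neq c'$; bounding the former by the hypothesis $\|c\|_D^2 = \inn{c,c}_D \leq \kappa$ and the latter by $|\inn{c,c'}_D| \leq \gamma$ yields
$$\rho_D(\cC') = \frac{1}{m^2} \sum_{c, c' \in \cC'} |\inn{c, c'}_D| \;\leq\; \frac{1}{m^2}\bigl(m\kappa + (m^2 - m)\gamma\bigr) \;=\; \gamma + \frac{\kappa - \gamma}{m}.$$

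Next I would plug in the size constraint. Taking $d = |\cC|\,\gamma'/(\kappa - \gamma)$, any $\cC'$ with $|\cC'| \geq |\cC|/d$ satisfies $m \geq (\kappa - \gamma)/\gamma'$, and hence by the display above $\rho_D(\cC') \leq \gamma + (\kappa - \gamma)\cdot \gamma'/(\kappa - \gamma) = \gamma + \gamma'$. Since this holds for \emph{every} such subclass, the definition of the statistical dimension on average gives $\sda_D(\cC, \gamma + \gamma') \geq d = |\cC|\,\gamma'/(\kappa - \gamma)$, which is exactly the claim.

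The argument is a short averaging computation, so I do not anticipate a genuine obstacle. The two things to watch are: (i) the quantifier structure in the definition of $\sda$ — one must verify the bound on $\rho_D(\cC')$ for \emph{all} subclasses of relative size at least $1/d$, which is automatic here since the estimate $\gamma + (\kappa - \gamma)/m$ depends only on $m = |\cC'|$ and is decreasing in $m$; and (ii) the implicit assumption $\kappa - \gamma > 0$, without which $d$ is not a meaningful quantity (in the regime of interest $\kappa$ is close to the largest possible norm and $\gamma$ is tiny, so this is harmless, and when $\gamma \geq \kappa$ the stated bound is vacuous).
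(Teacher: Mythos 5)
Your proof is correct: the averaging computation $\rho_D(\cC') \leq \gamma + (\kappa-\gamma)/|\cC'|$ together with the size threshold $|\cC'| \geq (\kappa-\gamma)/\gamma'$ is exactly the standard argument, and your handling of the quantifier over subclasses and the degenerate case $\kappa \leq \gamma$ is sound. The paper itself gives no proof of this lemma (it is imported verbatim from \cite{goel2020superpolynomial}, Lemma 2.6), and your argument matches the one given there, so there is nothing further to reconcile.
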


\subsection{The problem of learning parities}
One of the most basic problems in classical learning theory is that of learning the concept class of parity functions. Let the domain be $\X = \{-1, 1\}^n$, and for any subset $S \subseteq [n]$, define $\chi_S(x) = \oplus_{i \in S} x_i = \prod_{i \in S} x_i$ to be the parity on $S$. Here $x_i \oplus x_j = x_i x_j$ is simply the XOR when bits are represented by $\{-1, 1\}$. Let $D$ be any distribution on $\X$. We say a learner is able to learn parities under $D$ if given access to labeled examples $(x, \chi_S(x))$ where $x \sim D$ and $S \subseteq [n]$ is unknown (or, in the SQ setting, given access to the corresponding SQ oracle), and for any error parameter $\epsilon$, it is able to output a function $h$ such that $\pr_{x \sim D}[h(x) \neq \chi_S(x)] \leq \epsilon$.

The problem of learning parities displays a striking phase transition in going from the noiseless to the noisy setting. Given noiseless labeled examples, the problem of recovering the right parity is simply a question of solving linear equations over $\mathbb{F}_2$, and can be done using Gaussian elimination by a PAC learner using only $\Theta(n)$ examples. With just a little noise, however, the problem seems to become intractable. Perhaps the simplest noise model one can consider is the classification noise model, where every example has its label flipped with some constant probability $\eta$ (known as the noise rate). Learning parities under classification noise is the basis of the famous \emph{Learning Parity with Noise (LPN)} problem. Formally, the search version of LPN with noise rate $\eta$ is precisely the problem of learning parities under the uniform distribution on $\X$ and with classification noise at rate $\eta$. Usually one also has the additional knowledge that the true target $\chi_S$ (the ``secret'') is picked uniformly at random from the set of all parities. This problem is widely conjectured to be hard, including for quantum algorithms, and is even used as a basis for cryptography (see \cite{pietrzak2012cryptography} for a survey). The best-known algorithm in the PAC setting runs in slightly subexponential time \cite{blum2003noise}.

Since SQ learners are naturally tolerant of classification noise, one would expect that there are no SQ learners for parities under the uniform distribution, and indeed, this is one of the foundational results in the SQ literature.
\begin{theorem}[\cite{kearns1998efficient}]\label{thm:kearns_parity}
	Any SQ learner requires $2^{\Omega(n)}$ queries (even using tolerance $2^{-O(n)}$) to learn parities under the uniform distribution on $\{-1, 1\}^n$ even up to constant error (say $1/3$).
\end{theorem}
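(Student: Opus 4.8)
The plan is to run the classical ``correlational'' argument for this SQ lower bound: an adversarial oracle that answers each query with the component of its value that does not depend on the target parity can fool the learner on all but a negligible fraction of the $2^n$ targets. Throughout, take $D = \unif(\{-1,1\}^n)$, recall that the parities $\{\chi_S\}_{S \subseteq [n]}$ form an orthonormal family in $L^2(D, \{-1,1\}^n)$, and fix a lower bound $\tau$ on the tolerances used; we will show the learner must make more than $(2^n - O(1))\tau^2$ queries, so that $\tau = 2^{-cn}$ for any constant $c < 1/2$ forces $2^{\Omega(n)}$ queries.

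First I would decompose an arbitrary statistical query $\phi : \{-1,1\}^n \times \{-1,1\} \to [-1,1]$ as $\phi(x,y) = a(x) + y\,b(x)$ with $a = (\phi(\cdot,1) + \phi(\cdot,-1))/2$ and $b = (\phi(\cdot,1) - \phi(\cdot,-1))/2$, both $[-1,1]$-valued. Under target $\chi_S$ the true query value is $\ex_{x \sim D}[\phi(x, \chi_S(x))] = \ex_{x\sim D}[a(x)] + \ex_{x\sim D}[\chi_S(x) b(x)] = \ex_{x\sim D}[a(x)] + \hat b(S)$, where the first term does not depend on $S$ and $\hat b(S) = \inn{b, \chi_S}_D$ is a Fourier coefficient of $b$. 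Parseval gives $\sum_{S} \hat b(S)^2 = \ex_{x\sim D}[b(x)^2] \le 1$, so fewer than $1/\tau^2$ sets $S$ satisfy $|\hat b(S)| > \tau$.

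Next I would set up the simulation step, which is the part I expect to require the most care to state precisely. Fix the learner's internal randomness (the paper's convention is to treat the learner as always succeeding), and run it while answering each query $\phi_i$ with the $S$-independent value $\ex_{x\sim D}[a_i(x)]$; since the $i$-th query is a deterministic function of the first $i-1$ answers, which are themselves independent of the target, this yields a fixed sequence of queries $\phi_1, \dots, \phi_q$ and a fixed Boolean output hypothesis $h$, all independent of $S$. Let $B = \bigcup_{i=1}^q \{S : |\hat{b_i}(S)| > \tau\}$, so $|B| < q/\tau^2$ by the previous paragraph. For every $S \notin B$, each supplied answer $\ex_{x\sim D}[a_i(x)]$ differs from the true value $\ex_{x\sim D}[a_i(x)] + \hat{b_i}(S)$ by at most $\tau$, hence lies within the tolerance of $\phi_i$; so this is a legal execution of the learner against the oracle for $\chi_S$, and by correctness $\pr_{x\sim D}[h(x) \neq \chi_S(x)] \le 1/3$, i.e.\ $\hat h(S) = \inn{h, \chi_S}_D \ge 1/3$. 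Applying Parseval once more to the Boolean function $h$, $\sum_S \hat h(S)^2 = 1$, so at most $9$ sets satisfy $\hat h(S) \ge 1/3$; hence $2^n - |B| = |B^c| \le 9$, giving $q > (2^n - 9)\tau^2$. Taking $\tau = 2^{-cn}$ with $c < 1/2$ makes this $2^{\Omega(n)}$, and since the learner may use any tolerances $\ge \tau$, the bound holds as stated. (A randomized learner succeeding with constant probability is handled by a routine averaging argument over its coins; and the same computation with an arbitrary error rate $\eta$ in place of $1/3$ shows the bound holds for learning to any constant error $\eta < 1/2$.)

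Finally, I would note that a slightly weaker version also follows directly from the machinery already developed in the paper: since the parities are orthonormal, \cref{lemma:sda} with $\gamma = 0$ and $\kappa = 1$ gives $\sda_D(\{\chi_S\}, \tau^2) \ge 2^n\tau^2$, and then \cref{thm:sq_lower_bound} with $\beta = 1$ yields a $2^{\Omega(n)}$-query lower bound for learning parities to any sufficiently small constant \emph{squared} loss, which by the elementary identity (squared loss $= 4\times$ classification error for Boolean hypotheses) translates into some fixed constant classification error. The correlational argument above is preferable only in that it pins down the exact constant $1/3$ (indeed any constant below $1/2$).
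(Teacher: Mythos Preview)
The paper does not actually prove \cref{thm:kearns_parity}; it is stated as a cited result from \cite{kearns1998efficient} and used as a black box (specifically in the proof of \cref{cor:stab-sq-lower}). So there is no ``paper's own proof'' to compare against.

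That said, your argument is correct and is essentially the classical correlational/adversary proof one finds in the SQ literature. The decomposition $\phi(x,y) = a(x) + y\,b(x)$, the Parseval bound on the number of parities with large $\hat b(S)$, the target-independent simulation, and the second Parseval application to the output hypothesis $h$ are all sound. The conclusion $q > (2^n - 9)\tau^2$ yields $2^{\Omega(n)}$ queries for any tolerance $\tau = 2^{-cn}$ with $c < 1/2$, matching the statement. One small point worth making explicit: you are using that the learner's output $h$ is $\{-1,1\}$-valued so that $\|h\|_D^2 = 1$; this is implicit in the paper's definition of learning parities (the error metric is $\pr_{x\sim D}[h(x)\neq\chi_S(x)]$), but a sentence noting that any hypothesis can be rounded to Boolean without increasing classification error would make the argument airtight.

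Your final paragraph is also accurate: the paper's own machinery (\cref{lemma:sda} plus \cref{thm:sq_lower_bound}) does recover a version of this lower bound, but only for squared loss up to $\sqrt{\beta/3} = \sqrt{1/3}$, which under the identity (squared loss $= 4\cdot$ classification error for Boolean $h$) corresponds to classification error roughly $0.14$ rather than $1/3$. So your direct argument is indeed needed to get the constant $1/3$ (or any constant below $1/2$) as stated.
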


Thus we see that simple Gaussian elimination is an example of an efficient PAC learner that is not SQ. This establishes a characteristic limitation of SQ algorithms: while they include a wide range of common algorithms, they do not include algorithms that depend entirely on ``algebraic'' structure.

It is worth emphasizing that this discussion has considered learning parities with a \emph{classical} representation of the data. When given a \emph{quantum} representation of the data, as in the quantum ``example state'' $\ket{\psi} = 2^{-n/2} \sum_{x \in \{0, 1\}^n} \ket{x, \chi_S(x)}$ (taking the distribution over the domain to be uniform), the task becomes easy even with noise \cite{cross2015quantum}. This is because we can now use Hadamard gates to implement a Boolean Fourier transform \`{a} la the famous Bernstein--Vazirani algorithm \cite{bernstein1997quantum}. 

\section{Noise-tolerant SQ learning}\label{sec:sq-noise-tolerance}
One of the prime features of classical SQ learning is its inherent noise tolerance. From an intuitive standpoint, certain common stochastic noise models are systematic enough that their effects \emph{in expectation} can be predicted in advance, and hence either be corrected for or bounded. Slightly more precisely, the query expectations of a noisy state are often related in simple ways to the query expectations on a noiseless state, so that the latter can be recovered from the former. We mainly consider three such noise models here: (a) classical classification noise and malicious noise, (b) quantum depolarizing noise, and (c) more general quantum channels with bounded noise.

\subsection{Classification and malicious noise}
Classification noise \cite{angluin1988learning} and malicious noise \cite{valiant1985learning,kearns1993learning} are two classical Boolean noise models that SQ algorithms are able to handle. In the classification noise model, every example's label is flipped with probability $\eta$ (known as the noise rate). The malicious noise model is a stronger form of noise where for any given example, with probability $1 - \eta$, the label is reported correctly, but with probability $\eta$ both the point and its label may be arbitrary (and adversarially selected based on the learner's behavior so far). We note that these models are well-defined even in the p-concept setting and hence for quantum states, and simply introduce further randomness into the label. The following results were originally stated for Boolean functions but readily extend to p-concepts.

\begin{theorem}[\cite{kearns1998efficient}]
	Let $\cC$ be a p-concept class learnable under distribution $D$ in the SQ model up to error $\epsilon$ using $q$ queries of tolerance $\tau$. Then for any constant $0 < \eta < 1/2$, even with respect to an SQ oracle with classification noise at rate $\eta$ (i.e., one that computes expectations with classification noise), $\cC$ is learnable up to $\epsilon$ using $O(q)$ queries of tolerance $O(\tau (1 - 2\eta))$. If the learner is given noisy training examples as opposed to access to a noisy SQ oracle, then $\tilde{O}(\frac{q}{\poly(\tau(1 - 2\eta))})$ noisy examples suffice.
\end{theorem}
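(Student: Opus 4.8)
The plan is to reduce learning from a classification-noise SQ oracle to learning from an ordinary SQ oracle: I will show that each noiseless query can be simulated using $O(1)$ queries to the noisy oracle at a somewhat smaller tolerance. The case of noisy training examples then follows by the routine device of simulating the noisy SQ oracle with empirical averages. The key algebraic tool is a decomposition of any query $\phi : \E \times \{-1,1\} \to [-1,1]$ into a label-independent part and a label-linear part. Writing $\phi_+(E) = \tfrac12(\phi(E,1)+\phi(E,-1))$ and $\phi_-(E) = \tfrac12(\phi(E,1)-\phi(E,-1))$, both again $[-1,1]$-valued, we have $\phi(E,Y) = \phi_+(E) + Y\,\phi_-(E)$ for $Y \in \{-1,1\}$. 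Under classification noise at rate $\eta$ the noisy label satisfies $\ex[\tilde Y \mid E] = (1-\eta)f_\rho(E) - \eta f_\rho(E) = (1-2\eta)f_\rho(E)$, so on query $\phi$ with tolerance $\tau'$ the noisy oracle returns a value within $\tau'$ of
\[
\ex_{E \sim D}[\phi_+(E)] + (1-2\eta)\,\ex_{E\sim D}[f_\rho(E)\,\phi_-(E)].
\]

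To simulate the noiseless value, I would additionally issue the query $\phi_+$ (regarded as an SQ query that ignores its second argument); since it does not depend on the label it is unaffected by noise, and its expectation $\ex_E[\phi_+(E)]$ is recovered to within $\tau'$. Subtracting this from the previous estimate and dividing by $(1-2\eta)$ recovers $\ex_E[f_\rho(E)\phi_-(E)]$ to within $2\tau'/(1-2\eta)$, and adding back the estimate of $\ex_E[\phi_+(E)]$ yields $\ex_{E,Y}[\phi(E,Y)] = \ex_E[\phi_+(E)] + \ex_E[f_\rho(E)\phi_-(E)]$ to within $O(\tau'/(1-2\eta))$. Choosing $\tau' = \Theta(\tau(1-2\eta))$ makes this at most $\tau$, at the cost of two noisy queries per simulated query, so running the assumed SQ learner for $\cC$ on top of this simulation gives a learner achieving squared loss $\epsilon$ with $O(q)$ queries of tolerance $\Theta(\tau(1-2\eta))$ to the noisy oracle. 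For the training-example version, I would simulate each noisy SQ query of tolerance $\tau'$ by drawing $\tilde O(1/\tau'^2)$ fresh noisy examples and returning the empirical average of $\phi(E_i, \tilde Y_i)$; Hoeffding's inequality and a union bound over the $O(q)$ simulated queries give the claimed $\tilde O(q/\poly(\tau(1-2\eta)))$ example complexity (one could reuse examples across queries to trade the factor $q$ for a polylogarithmic overhead, but this is not needed for the stated bound).

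The main obstacle is that the learner is not, a priori, told $\eta$, which the simulation above uses explicitly. I would handle this in the standard way of \citet{kearns1998efficient}: guess $\hat\eta$ from a grid fine enough that some guess lies within $O(\tau(1-2\eta))$ of $\eta$, run the simulation with each guess to obtain a list of candidate hypotheses, and select among them via a validation step that uses additional noisy queries. I expect essentially all of the bookkeeping to concentrate here, and this step needs slightly more care than in the classical Boolean case: the empirical squared loss against noisy labels is minimized not by $f_\rho$ but by its depolarized version $(1-2\eta)f_\rho$, so one cannot simply pick the candidate of smallest noisy empirical loss. However, the grid point closest to $\eta$ produces a hypothesis of loss $\le \epsilon$, and one can estimate $\eta$ itself to the precision needed to rescale and correctly rank the candidates (for instance by a bootstrapping argument once any sufficiently accurate hypothesis is in hand). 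If instead the learner is given $\eta$, or merely an upper bound bounded away from $1/2$, this step is unnecessary and the argument is exactly as above.
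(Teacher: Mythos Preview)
Your proposal is correct and follows the standard Kearns approach that the paper itself only sketches in one sentence (``one first relates the noisy query expectations to the true expectations, and then argues that when using a suitably small tolerance \dots\ the effects of the noise can be corrected for''). Your decomposition $\phi(E,Y)=\phi_+(E)+Y\phi_-(E)$ is exactly the right way to make this precise, and the grid-search validation for unknown $\eta$ is the same device the paper invokes explicitly in its depolarizing-noise proof; there is nothing to add.
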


\begin{theorem}[\cite{aslam1998specification}]
	Let $\cC$ be a p-concept class learnable under distribution $D$ in the SQ model up to error $\epsilon$ using $q$ queries of tolerance $\tau$. An SQ oracle with malicious noise at rate $\eta$ is one that computes query expectations with respect to a distribution $(1 - \eta)f(D) + \eta Q$, where $f(D)$ denotes the true labeled distribution $(x, y)$ for $x \sim D, y \sim f(x)$ ($f$ being the unknown target p-concept), and $Q$ is an arbitrary and adversarially selected distribution on $\X \times \{-1, 1\}$. If $\eta = \tilde{O}(\epsilon)$ and $\eta < \tau$, then even with respect to an SQ oracle with malicious noise at rate $\eta$, $\cC$ is learnable up to $\epsilon$ using $O(q)$ queries of tolerance $\tau - \eta$. If the learner is given noisy training examples as opposed to access to a noisy SQ oracle, then $\cC$ is learnable (with constant probability) using $\tilde{O}(\frac{q}{\poly(\tau - \eta)})$ noisy examples suffice. (More efficient implementations are also available in some special cases).
\end{theorem}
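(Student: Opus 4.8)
The plan is to simulate a \emph{noiseless} SQ oracle for the target $f$ using queries to the malicious-noise oracle, at the price of a slightly smaller tolerance; once this simulation is in hand, any noiseless SQ learner for $\cC$ transfers over verbatim. The key tool is the standard decomposition of an arbitrary query $\phi : \X \times \{-1,1\} \to [-1,1]$ into a target-independent part and a correlational part. Writing $\phi_0(x) = \tfrac12(\phi(x,1)+\phi(x,-1))$ and $\phi_1(x) = \tfrac12(\phi(x,1)-\phi(x,-1))$, both valued in $[-1,1]$, we have $\phi(x,y) = \phi_0(x) + y\,\phi_1(x)$, so the noiseless query value is $\ex_{x\sim D}[\phi_0(x)] + \ex_{x\sim D}[f(x)\phi_1(x)]$.

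First I would observe that $\ex_{x\sim D}[\phi_0(x)]$ depends only on the (known) distribution $D$, so in the distribution-specific setting the learner can compute it exactly without consulting any oracle. This step is essential precisely because malicious noise also corrupts the $x$-marginal: the noisy oracle's estimate of $\ex[\phi_0]$ would be polluted by an uncontrollable additive term $\eta\,\ex_Q[\phi_0]$, whereas the classification-noise case of the previous theorem leaves this part untouched and needs no such trick. Second, for the correlational part I would query the noisy oracle on $\phi_1'(x,y) = y\,\phi_1(x)$ (a valid query, since it lands in $[-1,1]$), whose expectation under $(1-\eta)f(D) + \eta Q$ equals $(1-\eta)\ex_{x\sim D}[f(x)\phi_1(x)] + \eta\,\ex_Q[y\,\phi_1(x)]$. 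Because $|\ex_Q[y\,\phi_1(x)]| \le 1$ and the oracle answers within its tolerance $\tau'$, the returned number lies within $\tau' + \eta$ of $(1-\eta)\ex_{x\sim D}[f(x)\phi_1(x)]$; dividing by $1-\eta$ recovers $\ex_{x\sim D}[f(x)\phi_1(x)]$ to accuracy $(\tau'+\eta)/(1-\eta)$. Adding the exactly computed $\ex_D[\phi_0]$ gives a simulated noiseless answer of the same accuracy. Choosing $\tau'$ so that $(\tau'+\eta)/(1-\eta) \le \tau$ — i.e.\ $\tau' \approx (1-\eta)\tau - \eta$, which remains positive in the claimed regime $\eta < \tau$ (up to the lower-order $\eta\tau$ correction) and equals $\tau - \eta$ to leading order — finishes the SQ-oracle statement with $O(q)$ noisy queries of tolerance $\tau - \eta$.

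For the example-based version I would additionally simulate the noisy SQ oracle from noisy samples in the usual way: to answer the query $\phi_1'$ within tolerance $\tau'$ with high probability, average $\phi_1'$ over $\tilde O(1/\tau'^2)$ fresh noisy examples and apply a Hoeffding bound, noting that the (adaptively chosen) $\eta$-fraction of corrupted examples can shift this empirical average by at most $\eta + \tilde O(1/\sqrt m)$, which is already folded into the $\eta$ term above. A union bound over the $q$ queries yields $\tilde O(q/\poly(\tau - \eta))$ total examples and constant success probability, amplifiable in the standard way.

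I expect the only delicate point to be the error bookkeeping around the $1/(1-\eta)$ rescaling, since this is exactly what forces the hypothesis $\eta < \tau$ (a positive effective tolerance must survive the division) and, in combination with the usual polynomial relationship between $\tau$ and $\epsilon$ in SQ learning, the hypothesis $\eta = \tilde O(\epsilon)$; one should also note that the learner must know $\eta$ (or a sufficiently tight upper bound) to perform the rescaling, though plugging in an approximate value perturbs the analysis only at lower order. Everything else is the routine verification — already flagged in the excerpt — that the classical argument of \cite{aslam1998specification} never uses determinism of the labels and so goes through unchanged for conditional-mean p-concepts, hence for quantum states $\rho$ via $f_\rho$.
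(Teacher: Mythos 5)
Your proposal is correct in substance, and it follows the same high-level plan that the paper (which only sketches this cited result of \cite{aslam1998specification}) indicates: relate the malicious-noise query expectations to the noiseless ones, then correct or absorb the discrepancy within the tolerance. Where you differ is in the mechanics. The paper's intended argument (made explicit in its later subsection on bounded-noise channels) is the blunter absorption: since queries are bounded, replacing $f(D)$ by $(1-\eta)f(D)+\eta Q$ shifts any query expectation by at most $O(\eta)$, so a noisy query at tolerance $\tau-O(\eta)$ already simulates a noiseless query at tolerance $\tau$, with no need to know $\eta$, $Q$, or even $D$. You instead split $\phi(x,y)=\phi_0(x)+y\,\phi_1(x)$, compute $\ex_{x\sim D}[\phi_0(x)]$ exactly from knowledge of $D$, and rescale the correlational estimate by $1/(1-\eta)$. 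This buys marginally better constants, but it costs the assumption that $\eta$ is known (you correctly patch this with a grid search), and your claim that the decomposition is \emph{essential} because malicious noise corrupts the $x$-marginal is an overstatement: the corruption of the $\phi_0$ part is likewise an additive shift of at most $2\eta$ and is handled by the same absorption, so the simpler argument goes through unchanged. Two bookkeeping points, neither fatal: your effective tolerance comes out as $\tau-\eta-\eta\tau$ (or $\tau-2\eta$ without the rescaling, since queries take values in $[-1,1]$ rather than $[0,1]$), matching the stated $\tau-\eta$ only up to constant and lower-order slack, which is the granularity at which the theorem is quoted anyway; and, as you note, $\eta=\tilde{O}(\epsilon)$ is not needed by your oracle simulation itself but reflects the information-theoretic limit on malicious noise inherited from \cite{aslam1998specification}, so carrying it as a hypothesis is appropriate. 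Your sample-based simulation via Hoeffding estimates and a union bound over the $q$ queries matches the standard argument the paper intends, and your closing observation that nothing in the argument uses determinism of labels is exactly the point the paper makes in extending these results to p-concepts and hence to quantum states.
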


The proofs of both theorems are similar: one first relates the noisy query expectations to the true expectations, and then argues that when using a suitably small tolerance (or sufficiently many examples) the effects of the noise can be corrected for (within information theoretic limits). 

\subsection{Depolarizing noise}
Depolarizing noise acts on quantum states by shifting them closer to the maximally mixed state. One can consider a setting where it acts on an entire $n$-qubit state at once, as well as one where it acts independently on each individual qubit. We will consider the former.

\begin{definition}[Depolarizing noise]
	Let $\rho$ be an arbitrary $n$-qubit state. Then depolarizing noise at rate $\eta$ ($0 < \eta < 1$) acts on this state by transforming it into $\Lambda_{\eta}(\rho) = (1 - \eta)\rho + \eta (I/2^n)$.
\end{definition}

\begin{theorem}
	Let $0 < \eta < 1$ be any constant, and let $\Lambda_\eta$ denote the depolarizing channel at noise rate $\eta$. Let $\cC$ be a class of $n$-qubit quantum states and $D$ be a distribution on $\E$, the space of two-outcome measurements on such states. Let $L$ be an SQ learner capable of learning $\cC$ under $D$ using $q$ queries of tolerance $\tau$. There there exists a learner $L'$ such that for any $\rho \in \cC$, $L'$ is capable of learning $\rho$ under $D$ using $q$ queries of tolerance $\tau(1 - \eta)$ given only SQ access to $\Lambda_\eta(\rho)$ as well as sampling access to $D$.
\end{theorem}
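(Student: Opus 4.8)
The plan is to exploit the fact that depolarizing noise affects every query expectation in a completely predictable way: since $\Lambda_\eta(\rho) = (1-\eta)\rho + \eta(I/2^n)$ is affine in $\rho$, so is the induced conditional mean function. Concretely, $f_{\Lambda_\eta(\rho)}(E) = 2\trc(E\,\Lambda_\eta(\rho)) - 1 = (1-\eta)(2\trc(E\rho) - 1) + \eta(2\trc(E)/2^n - 1) = (1-\eta) f_\rho(E) + \eta\, g(E)$, where $g(E) = 2\trc(E)/2^n - 1 = f_{I/2^n}(E)$ is the conditional mean function of the maximally mixed state. The key point is that $g$ does not depend on $\rho$ at all, and is computable by the learner from knowledge of $E$ alone (it knows $D$, and each query $\phi$ is evaluated at $E$'s it is effectively choosing the distribution of).

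First I would have $L'$ simulate $L$ query by query. When $L$ issues a query $(\phi, \tau)$, the learner $L'$ needs $\ex_{E\sim D,\, Y\sim f_\rho(E)}[\phi(E,Y)]$ to within $\tau$, but it only has an oracle that returns $\ex_{E\sim D,\, Y\sim f_{\Lambda_\eta(\rho)}(E)}[\phi(E,Y)]$ to within some tolerance. The idea is to decompose the target expectation. Write the expectation over $Y$ explicitly: for fixed $E$, $\ex_{Y\sim f_\rho(E)}[\phi(E,Y)] = \frac{1+f_\rho(E)}{2}\phi(E,1) + \frac{1-f_\rho(E)}{2}\phi(E,-1) = \frac{\phi(E,1)+\phi(E,-1)}{2} + f_\rho(E)\cdot\frac{\phi(E,1)-\phi(E,-1)}{2}$. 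So defining $\psi(E) = \frac{\phi(E,1)-\phi(E,-1)}{2} \in [-1,1]$ and $c(E) = \frac{\phi(E,1)+\phi(E,-1)}{2}$, the true (noiseless) expectation is $\ex_{E\sim D}[c(E)] + \ex_{E\sim D}[f_\rho(E)\psi(E)]$, and similarly the noisy expectation equals $\ex_{E\sim D}[c(E)] + \ex_{E\sim D}[f_{\Lambda_\eta(\rho)}(E)\psi(E)] = \ex_{E\sim D}[c(E)] + (1-\eta)\ex_{E\sim D}[f_\rho(E)\psi(E)] + \eta\,\ex_{E\sim D}[g(E)\psi(E)]$. Thus $L'$ can recover the quantity $\ex_{E\sim D}[f_\rho(E)\psi(E)]$ from the noisy oracle by subtracting off $\ex_{E\sim D}[c(E)]$ and $\eta\,\ex_{E\sim D}[g(E)\psi(E)]$ (both computable from $D$ and $\phi$ alone, the first exactly and the second — if one wants to be careful — to arbitrary accuracy by sampling $E\sim D$, which is allowed), and then dividing by $1-\eta$. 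Adding back $\ex_{E\sim D}[c(E)]$ recovers the desired noiseless query expectation.

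For the tolerance bookkeeping: to guarantee the reconstructed value is within $\tau$ of the true noiseless expectation, it suffices that the noisy oracle answer the underlying query to within $\tau(1-\eta)$, since the only amplification step is the division by $1-\eta$ (the subtracted terms are, up to the sampling approximation of $\ex[g\psi]$, handled exactly by $L'$, and that sampling error can be driven below any desired threshold, so I would fold it into a negligible slack or treat $D$-sampling as exact as elsewhere in the paper). Hence $L'$ makes exactly $q$ queries, each of tolerance $\tau(1-\eta)$, and faithfully simulates $L$'s view, so it outputs the same $\sigma$ with the same squared-loss guarantee. I do not expect any real obstacle here; the only mild subtlety is making sure the query $(\psi$ rescaled into $[-1,1]$, or rather the original $\phi)$ is passed to the noisy oracle unchanged — indeed $L'$ passes $\phi$ itself, not a modified function, and does all the correction in post-processing — and confirming that $g(E)$, and hence $\ex_{E\sim D}[g(E)\psi(E)]$, is genuinely available to $L'$; this holds because $g(E) = 2\trc(E)/2^n - 1$ depends only on the (known) measurement operator $E$. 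The writeup would also note the single-qubit-depolarizing variant behaves analogously but with an $E$-dependent contraction, which is why the theorem restricts to the global channel.
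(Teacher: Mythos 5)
Your proposal is correct and follows essentially the same route as the paper's proof: both exploit linearity of the query expectation in the state to write the noisy expectation as an affine function of the noiseless one, subtract the $\rho$-independent maximally-mixed contribution (estimated by sampling from $D$), and divide by $1-\eta$, so a noisy query of tolerance $\tau(1-\eta)$ simulates a noiseless query of tolerance $\tau$. Your explicit split of $\phi$ into $c(E)$ and $\psi(E)$ is just a more granular bookkeeping of the same correction; the only element the paper adds is a brief grid-search argument for the case where $\eta$ is not known exactly.
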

\begin{proof}
	For simplicity, we will assume that we know the noise rate $\eta$ exactly. (So long as we have an upper bound on $\eta$, then by a standard ``grid search'' argument due to \cite{kearns1998efficient}, we can estimate $\eta$ sufficiently closely simply by trying out many different values. Briefly: if say we try out $\eta = 0, \delta, 2\delta, \dots, 1$ ($1/\delta$ values in all), then one of these will be within $\delta/2$ of the true $\eta$. The algorithm when run with this guess for $\eta$ will produce a good hypothesis. By taking $\delta = O(\tau(1 - \eta)^2)$ and testing all $1/\delta$ hypotheses produced by our guesses for $\eta$ on a sufficiently large validation set, we can ensure the best one will perform and generalize well.)
	
	Let $\rho \in \cC$ be the unknown target. Observe that for any $E \in \E$, by linearity, \[ f_{\Lambda_\eta(\rho)}(E) = 2\trc(E \cdot \Lambda_\eta(\rho)) - 1 = (1 - \eta)f_\rho(E) + \eta f_{I/2^n}(E). \] Let $\phi : \E \times \{-1, 1\} \to [-1, 1]$ be any query made by $L$. Let $\phi[\rho]$ denote the query expectation of $\phi$ under $\rho$, given by $\ex_{x \sim D} \ex_{y \sim f_\rho(x)}[\phi(x, y)]$. Similarly let the noisy analogue be $\phi[\Lambda_{\eta}(\rho)]$. Again just by linearity, \[ \phi[\Lambda_{\eta}(\rho)] = (1 - \eta) \phi[\rho] + \eta \phi[I/2^n]. \] The latter quantity is independent of $\rho$ and can be estimated to arbitrary accuracy by sampling from $D$, allowing us to estimate $\phi[\rho]$ as $\frac{\phi[\Lambda_{\eta}(\rho)] - \eta \phi[I/2^n]}{1 - \eta}$. So long as $\eta$ is bounded away from $1$, we can use a query of tolerance $\tau (1 - \eta)$ to estimate $\phi[D_{\eta}(\rho)]$ (as well as $1/\poly(\tau(1-\eta))$ unlabeled examples from $D$ to compute $\phi[I/2^n]$), and thereby estimate $\phi[\rho]$ to within $\tau$. Thus we can simulate $L$ even with depolarizing noise.
\end{proof}

It is worth stressing that we are able to handle any constant noise rate $\eta \in (0, 1)$, and the price we pay is requiring the tolerance to scale as $\tau(1 - \eta)$.

\subsection{Quantum channels with bounded noise}

We can also consider more general kinds of quantum channels with bounded noise. As long as the queries are bounded, small amounts of noise cannot alter query expectations too much, and so can be ``absorbed'' into the tolerance. This is similar to classical malicious noise: since classical malicious noise at rate $\eta$ only can only change query expectations by $\eta$ (recall that the queries are bounded by 1), a noisy query of tolerance $\tau - \eta$ is able to simulate a noiseless query of tolerance $\tau$. Unlike with depolarizing noise, this means we cannot handle arbitrary $\eta$; this is an artifact of the fact that more general kinds of noise do not permit the kind of systematic correction we were able to perform for depolarizing noise.

For concreteness here we consider a noisy quantum channel $\Lambda$ such that $\lVert \Lambda - 1_n \rVert_\diamond \leq \eta$, where $1_n$ is the identity map on $n$-qubit states and the norm is the diamond norm. We do not define this norm here, but its chief property for our purposes is that for any $n$-qubit state $\rho$ and 2-outcome measurement $E$, $\lvert \trc(E(\rho - \Lambda(\rho))) \rvert \leq \eta$. Similar theorems can be proven with respect to other distance measures such as fidelity.

\begin{theorem}
	Let $\Lambda$ be a quantum channel such that $\lVert \Lambda - 1_n \rVert_\diamond \leq \eta$, as above.  Let $\mathcal{C}$ be a class of $n$-qubit quantum states learnable under distribution $D$ using $q$ queries of tolerance $\tau > 2\eta$. Then $\mathcal{C}$ is still learnable under noise $\Lambda$ (i.e.\ when our queries are answered not with respect to $\rho$ but $\Lambda(\rho)$) using $q$ noisy queries of tolerance $\tau - 2\eta$.
\end{theorem}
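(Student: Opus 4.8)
The plan is to show that the channel $\Lambda$ perturbs every query expectation by at most $2\eta$, so that an SQ oracle for $\Lambda(\rho)$ of tolerance $\tau - 2\eta$ can be used as a black box to simulate an SQ oracle for $\rho$ of tolerance $\tau$; the learner $L$ then runs unchanged. As in the depolarizing case, write $\phi[\sigma] = \ex_{E \sim D} \ex_{Y \sim f_\sigma(E)}[\phi(E, Y)]$ for the query expectation of a query $\phi : \E \times \{-1,1\} \to [-1,1]$ under a state $\sigma$.

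First I would record the effect of $\Lambda$ on conditional mean functions. For any two-outcome measurement $E \in \E$ and any state $\rho$, linearity of the trace gives
\[ f_{\Lambda(\rho)}(E) - f_\rho(E) = 2\trc(E\,\Lambda(\rho)) - 2\trc(E\rho) = -2\trc\big(E(\rho - \Lambda(\rho))\big), \]
and by the stated property of the diamond norm (that $\lvert \trc(E(\rho - \Lambda(\rho))) \rvert \le \eta$) this has absolute value at most $2\eta$, i.e.\ $\lVert f_{\Lambda(\rho)} - f_\rho \rVert_\infty \le 2\eta$. Next I would pass from conditional means to query expectations. Fixing a query $\phi$ and writing the inner expectation over the random label in terms of $f$,
\[ \ex_{Y \sim f_\rho(E)}[\phi(E, Y)] = \frac{\phi(E,1) + \phi(E,-1)}{2} + f_\rho(E)\,\frac{\phi(E,1) - \phi(E,-1)}{2}, \]
so the only $\rho$-dependent term is the last one, with a coefficient of absolute value at most $1$ since $\phi$ is $[-1,1]$-valued. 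Subtracting the analogous expression for $\Lambda(\rho)$, taking $\ex_{E\sim D}$, and applying the pointwise bound from the first step yields
\[ \big\lvert \phi[\rho] - \phi[\Lambda(\rho)] \big\rvert = \Big\lvert \ex_{E \sim D}\Big[(f_\rho(E) - f_{\Lambda(\rho)}(E))\,\tfrac{\phi(E,1) - \phi(E,-1)}{2}\Big]\Big\rvert \le 2\eta. \]

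Finally I would assemble the simulation: run $L$, and whenever it issues a query $\phi$ of tolerance $\tau$, forward $\phi$ to the noisy oracle with tolerance $\tau - 2\eta > 0$; the returned value is within $\tau - 2\eta$ of $\phi[\Lambda(\rho)]$ and hence, by the last display, within $(\tau - 2\eta) + 2\eta = \tau$ of $\phi[\rho]$. So $L$ effectively sees a faithful tolerance-$\tau$ SQ oracle for $\rho$, makes $q$ queries (now of tolerance $\tau - 2\eta$ against the noisy oracle), and outputs a hypothesis of squared loss at most $\epsilon$ as promised. There is no real obstacle here; the only thing to be careful about is the constant, namely tracking the factor of $2$ coming from the normalization $f_\rho = 2\trc(E\rho) - 1$ and checking that the diamond-norm hypothesis bounds $\trc(E(\rho - \Lambda(\rho)))$ directly without an extra loss, after which the rest is just linearity and the boundedness of $\phi$.
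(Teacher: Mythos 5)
Your proposal is correct and follows essentially the same argument as the paper: bound $\lvert \phi[\rho] - \phi[\Lambda(\rho)] \rvert \le 2\eta$ by extracting the label-dependent term $(\phi(E,1)-\phi(E,-1))\trc(E(\rho - \Lambda(\rho)))$, then absorb the $2\eta$ shift into the tolerance via the triangle inequality. Your intermediate step through $f_{\Lambda(\rho)} - f_\rho$ is just a rephrasing of the same calculation, so there is nothing substantive to distinguish the two.
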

\begin{proof}	
	As noted, for any state $\rho$ and measurement $E$, $\lvert \trc(E(\rho - \Lambda(\rho))) \rvert \leq \eta$. Consider any query $\phi: \E \times \{-1, 1\} \to [-1, 1]$. If $\phi[\rho]$ denotes the query expectation on a noiseless state and $\phi[\Lambda(\rho)]$ denotes the noisy one, then a straightforward calculation shows that \begin{align*}
		\left| \phi[\rho] - \phi[\Lambda(\rho)] \right|
		&= \Big\lvert \ex_{E \sim D} \big[(\phi(E, 1) - \phi(E, -1))\trc(E (\rho - \Lambda(\rho))) \big] \Big\rvert \\
		& \leq 2 \ex_{E \sim D} \lvert\trc(E (\rho - \Lambda(\rho)))\rvert\\
		& \leq 2 \eta.
	\end{align*} Thus just by the triangle inequality, if we calculate $\phi[\Lambda(\rho)]$ within tolerance $\tau - 2\eta$, then we also get $\phi[\rho]$ within $\tau$.

\end{proof}

\subsection{General noise for distribution-free learning}

So far, we've only considered distribution-specific learning, where the learner is only required to succeed with respect to a pre-specified distribution $D$. In the distribution-free case, where the learner is required to succeed no matter what $D$ is, we now give a simple proof that any SQ algorithm for a concept class can also handle any kind of quantum noise on the state, as long as the noise is known. This is unsurprising, and at a high level, the approach simply boils down to off-loading the noise from the state to the measurement. Learning a noisy set of measurements is thus handled by distribution-free learning algorithm.

Given a quantum operation $\Lambda$, its adjoint $\Lambda^\dagger$ is such that $\forall \rho, \trc(E \cdot \Lambda(\rho)) = \trc(\Lambda^\dagger(E) \cdot \rho)$ and always exists (see \cite{Rall_2019} for details on how to prove this folklore result). Let $\mathcal{D}$ be the distribution we are trying to learn concept class $\cC$ using statistical queries and let $\Lambda$ be the noise applied to the quantum state. We can then define $\mathcal{D^\dagger}$ to be the distribution $\Lambda^\dagger(E)$ where $E$ is drawn from $\mathcal{D}$ and by definition the traces (and thus the statistical queries) are the same when applied to $\rho$ and $\Lambda(\rho)$ respectively. Also by definition, a distribution-free learner for $\cC$ would also be able to learn with distribution $\mathcal{D^\dagger}$.

\section{Lower bounds on learning stabilizer states with noise} \label{sec:sq_lower}

\subsection{Stabilizer formalism}
The stabilizer states are a popular class of quantum states that are used throughout quantum information for areas like quantum error correction, classical simulation of quantum mechanics, and quantum communication. If we let $\mathcal{P} = \{I, X, Y, Z\}$ be the Pauli matrices, then we can define a generalization of the Pauli matrices to an $n$-qubit system as $\mathcal{P}_n = \{\pm 1\} \cdot \{I, X, Y, Z\}^{\otimes n}$.\footnote{Note that this is slightly different from the usual Pauli group $ \{\pm 1, \pm i\} \cdot \{I, X, Y, Z\}^{\otimes n}$. We choose to ignore the imaginary phases as they will never be a part of a valid stabilizer measurement: for instance, $(I + iX)/2$ is not even Hermitian.} As an example, $-X \otimes Y \otimes Z \otimes Y \otimes Z \in \mathcal{P}_5$, though in the future we will simply opt to write this Pauli matrix as $-XYZYZ$. It is not hard to show that $\forall P \neq \pm I^{\otimes n} \in \mathcal{P}_n$ half of the eigenvalues are $1$ and the other half are $-1$.

We say that a pure state $\rho = \ket{\psi}\bra{\psi}$ such that $\ket{\psi} \in \mathbb{C}^{2^n}$ is \emph{stabilized} by $P \in \mathcal{P}_n$ if $P \ket{\psi} = \ket{\psi}$. In other words, $\ket{\psi}$ must be an eigenvector of $P$ with eigenvalue 1. The set of pure states that are stabilized by the subset $S \subseteq \mathcal{P}_n$ are then the states that lie in the intersection of the eigenvalue 1 subspaces. It is known that if $S$ is an Abelian group that does not contain $-I^{\otimes n}$, then this intersection has dimension $2^n/\lvert S \rvert$ \cite{hamermesh1989group}. Due to the nature of the outer product, any vector $e^{i \theta}\ket{\psi}$ drawn from this space results in the same density matrix $\rho = \ketbra{\psi}{\psi}$.

\begin{definition}[Stabilizer states]
	Let $S \subset \mathcal{P}_n \setminus \{-I^{\otimes n}\}$ be an Abelian group of order $2^n$ (note that $\mathcal{P}_n$ is not itself a group under matrix multiplication). The unique state density matrix $\rho = \ketbra{\psi}{\psi}$ that results from the one-dimensional subspace that is stabilized by all elements of $S$ is then defined to be a stabilizer state. We then say that $S$ stabilizes $\rho$.
\end{definition}

The stabilizer states are then the set of all such quantum pure states that are stabilized by Abelian groups of order $2^n$ formed from $\mathcal{P}_n \setminus \{-I^{\otimes n}\}$.

\begin{proposition}\label{prop:stabilizer_intersection}
	Given two $n$-qubit stabilizer states $\ketbra{\psi}{\psi} \neq \ketbra{\phi}{\phi}$ with stabilizer groups $S$ and $S'$ respectively, then $\lvert S \cap S'| \leq 2^{n-1}$.
\end{proposition}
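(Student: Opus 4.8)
The plan is to exploit the fact that a stabilizer state is \emph{uniquely} determined by its stabilizer group, together with elementary group theory. First I would note that $S \cap S'$ is a subgroup of $S$, being an intersection of subgroups of $\mathcal{P}_n$, and recall that $|S| = 2^n$. By Lagrange's theorem, $|S \cap S'|$ divides $2^n$ and is therefore a power of two, so that $|S \cap S'| \in \{1, 2, \dots, 2^{n-1}, 2^n\}$. It thus suffices to rule out the case $|S \cap S'| = 2^n$.

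Suppose for contradiction that $|S \cap S'| = 2^n$. Then $S \cap S' = S$, since a subgroup of $S$ with the same finite cardinality as $S$ must be all of $S$; hence $S \subseteq S'$, and as $|S'| = 2^n = |S|$ this forces $S = S'$. But by the definition of a stabilizer state given above, the group $S$ determines the one-dimensional stabilized subspace, and therefore the density matrix $\ketbra{\psi}{\psi}$, uniquely. So $S = S'$ would imply $\ketbra{\psi}{\psi} = \ketbra{\phi}{\phi}$, contradicting the hypothesis. Hence $|S \cap S'| \leq 2^{n-1}$.

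There is essentially no hard step here: the argument is just a combination of Lagrange's theorem with the injectivity of the map sending a stabilizer group to its stabilizer state. The only point deserving a word of care is the last one — making explicit that distinct stabilizer states have distinct stabilizer groups — which is immediate from the definition, since the stabilized subspace (and hence $\rho$) is a function of $S$ alone. One could even compress the whole proof to a single line: $|S \cap S'|$ is a power of two dividing $2^n$, so if it exceeded $2^{n-1}$ it would equal $2^n$, forcing $S = S'$ and therefore $\ketbra{\psi}{\psi} = \ketbra{\phi}{\phi}$.
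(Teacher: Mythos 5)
Your proof is correct and follows essentially the same route as the paper's: both arguments observe that the stabilizer group determines the state (so $S = S'$ is impossible) and then use that $|S \cap S'|$ is a power of two dividing $2^n$, forcing $|S \cap S'| \leq 2^{n-1}$. The only cosmetic difference is that you justify the divisibility via Lagrange's theorem, whereas the paper cites the fact that $2^n/|S \cap S'|$ is the dimension of the jointly stabilized subspace; your version is, if anything, slightly more self-contained.
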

\begin{proof}
	Because $\ketbra{\psi}{\psi} \neq \ketbra{\phi}{\phi}$ then $S \neq S'$. We also know that $S \cap S'$ is an abelian group without $-I^{\otimes n}$, so $|S \cap S'| < 2^n$. Since $2^n / |S \cap S'|$ is the dimension of the space stabilized by this group \cite{hamermesh1989group}, it must be an integer. Due to the prime factorization of $2^n$, $|S \cap S'| = 2^m$ for some integer $0 \leq m < n$, of which the largest possible $m$ is $n-1$.
\end{proof}

\subsection{Difficulty of beating the maximally mixed state on uniform Pauli measurements}

Let $\cC$ be the class of all $n$-qubit stabilizer pure states. If $P \in \mathcal{P}_n$ is a Pauli operator, then the two-outcome measurement associated with $P$ is $(I + P)/2$, and is referred to as a Pauli measurement. We will first examine the natural distribution $D$ given by the uniform distribution over Pauli measurements, \[ \Epauli = \{ (I + P)/2 \mid P \in \mathcal{P}_n \}. \] In doing so, we will show that performing better than the trivial algorithm of always outputting the maximally mixed state $I/2^n$ is difficult.

We use the following folklore lemma (see e.g. \cite{rocchetto2018stabiliser} Lemma 1 for a simple proof).

\begin{lemma}\label{lemma:pauli_measure_trace}
	Let $E = (I + P)/2$ be a POVM measurement associated to a Pauli operator $P \in \mathcal{P}_n$ and $\rho$ be an $n$-qubit stabiliser state. Then $\trc(E\rho)$ can only take on the values $\{0, \frac{1}{2}, 1\}$, and:
	\[
	\begin{cases}
		\trc(E \rho) = 1 \text{ iff $P$ is a stabilizer of $\rho$;}\\
		\trc(E \rho) = 1/2 \text{ iff neither $P$ nor $-P$ is a stabilizer of $\rho$;}\\
		\trc(E \rho) = 0 \text{ iff $-P$ is a stabilizer of $\rho$.}
	\end{cases}
	\]
\end{lemma}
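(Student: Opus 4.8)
The plan is to reduce the statement to evaluating the single number $\bra{\psi}P\ket{\psi}$, where $\rho = \ketbra{\psi}{\psi}$. By linearity of the trace, $\trc(E\rho) = \trc\bigl(\tfrac{I+P}{2}\ketbra{\psi}{\psi}\bigr) = \tfrac12\bigl(1 + \bra{\psi}P\ket{\psi}\bigr)$, so it suffices to show that $\bra{\psi}P\ket{\psi}$ equals $1$, $-1$, or $0$ in the three respective cases; this simultaneously establishes that $\trc(E\rho)$ can only be $0$, $\tfrac12$, or $1$, since the three cases are exhaustive. Let $S$ be the stabilizer group of $\rho$. The cases $P \in S$, $-P \in S$, and $\pm P \notin S$ are also mutually exclusive: if $P$ and $-P$ both lay in $S$, then their product $-I^{\otimes n}$ would lie in $S$, which is forbidden for a stabilizer group.

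The two extreme cases are immediate from the definition of a stabilizer state. If $P \in S$, then $P\ket{\psi} = \ket{\psi}$, so $\bra{\psi}P\ket{\psi} = \braket{\psi|\psi} = 1$ and $\trc(E\rho) = 1$. If $-P \in S$, then $(-P)\ket{\psi} = \ket{\psi}$, i.e.\ $P\ket{\psi} = -\ket{\psi}$, so $\bra{\psi}P\ket{\psi} = -1$ and $\trc(E\rho) = 0$.

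The remaining case, in which neither $P$ nor $-P$ lies in $S$, is the heart of the argument. First I would show that $P$ must anticommute with some element $Q \in S$. Suppose not; since any two elements of $\mathcal{P}_n$ either commute or anticommute, $P$ then commutes with all of $S$, so $\langle S, P\rangle$ is Abelian. It has order exactly $2^{n+1}$ — its two cosets are $S$ and $PS$, distinct because $P \notin S$, and $P^2 = I$ — and it does not contain $-I^{\otimes n}$: if $-I^{\otimes n} = sP$ with $s \in S$, then $-P = s \in S$, contradicting $-P \notin S$. But an Abelian subgroup of $\mathcal{P}_n \setminus \{-I^{\otimes n}\}$ of order $2^{n+1}$ would stabilize a subspace of dimension $2^n/2^{n+1} = 1/2$, which is not an integer — the same dimension count used in \cref{prop:stabilizer_intersection}. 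Hence such a $Q$ exists. Now $Q\ket{\psi} = \ket{\psi}$, $Q = Q\conj$ (elements of $\mathcal{P}_n$ are Hermitian), and $PQ = -QP$, so
\[ \bra{\psi}P\ket{\psi} = \bra{\psi}PQ\ket{\psi} = -\bra{\psi}QP\ket{\psi} = -\bra{\psi}P\ket{\psi}, \]
where the first equality uses $Q\ket{\psi} = \ket{\psi}$ and the last uses $\bra{\psi}Q = (Q\ket{\psi})\conj = \bra{\psi}$. Therefore $\bra{\psi}P\ket{\psi} = 0$ and $\trc(E\rho) = 1/2$.

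I expect the only real obstacle to be establishing the existence of the anticommuting $Q \in S$ in the third case; everything else is bookkeeping. That step rests on the dimension-counting fact that an Abelian, $-I^{\otimes n}$-free subgroup of $\mathcal{P}_n$ of order $2^k$ stabilizes a subspace of dimension $2^n/2^k$ — the same fact underlying \cref{prop:stabilizer_intersection} — applied to the would-be larger group $\langle S, P\rangle$. The one subtlety is checking that adjoining $P$ keeps the group free of $-I^{\otimes n}$, which is precisely where the hypothesis $-P \notin S$ is used.
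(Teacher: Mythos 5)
Your proof is correct, but note that the paper itself does not prove this lemma at all---it is invoked as folklore with a pointer to Rocchetto's Lemma 1---so the comparison is with the standard argument rather than with an in-paper proof. The standard route writes the stabilizer state as the group average $\rho = 2^{-n}\sum_{s \in S} s$ and computes $\trc(P\rho) = 2^{-n}\sum_{s\in S}\trc(Ps)$, which collapses immediately because $\trc(Ps)$ vanishes unless $s = \pm P$; this gives $\trc(P\rho) = [P \in S] - [-P \in S]$ in one line. You instead handle the middle case by producing a $Q \in S$ that anticommutes with $P$ and conjugating, which is also sound: your reduction to $\bra{\psi}P\ket{\psi}$, the mutual exclusivity of the three cases (via $P(-P) = -I^{\otimes n}$), and the anticommutation trick $\bra{\psi}P\ket{\psi} = -\bra{\psi}P\ket{\psi}$ are all fine, and your existence argument for $Q$---that if $P$ commuted with all of $S$ then $\langle S, P\rangle = S \cup PS$ would be an Abelian, $-I^{\otimes n}$-free group of order $2^{n+1}$, contradicting the dimension count $2^n/|{\cdot}|$ used in \cref{prop:stabilizer_intersection}---is exactly the right use of the hypothesis $-P \notin S$ (which guarantees $-I^{\otimes n} \notin PS$). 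The trade-off is that your argument needs this extra group-theoretic step and the (true, but worth noting) fact that products of commuting elements of $\mathcal{P}_n$ stay in $\mathcal{P}_n$ (no stray factors of $i$), whereas the group-average identity makes all three cases fall out of a single trace computation; on the other hand, your version is more self-contained if one does not want to prove $\rho = 2^{-n}\sum_{s\in S}s$ first.
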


Simple algebraic manipulations then tell us that $\trc(P \rho)$ can only take on the values $\{1, 0, -1\}$ with $\trc(P \rho)$ being $1$ or $-1$ if and only if $P$ or $-P$ is in the stabilizer group of $\rho$ respectively. With this result, we can compute bounds on $|\inn{f_\rho, f_{\rho'}}_D|$ for stabilizer states $\rho$ and $\rho'$ by counting how many matrices lie in the intersection of their stabilizer groups or the negations of their stabilizer groups.

\begin{lemma}\label{lemma:stabilizer_correlations}
	Let $\cC$ be the concept class of $n$-qubit stabilizer pure states, and let $D$ denote the uniform distribution on $n$-qubit Pauli measurements. Then for any stabilizer states $\rho, \rho'$ with $\rho \neq \rho'$, $|\inn{f_\rho, f_{\rho}}_D| = \|f_\rho\|_D^2 = \frac{1}{2^n}$, and $|\inn{f_\rho, f_{\rho'}}_D| \leq \frac{1}{2^{n+1}}$. Furthermore, this inequality is tight.
\end{lemma}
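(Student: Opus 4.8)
The plan is to compute the inner product $\inn{f_\rho, f_{\rho'}}_D$ directly by summing over Pauli operators. First I would record that for a Pauli measurement $E_P = (I+P)/2$ we have $f_\rho(E_P) = 2\trc(E_P\rho) - 1 = \trc(P\rho)$, and by \cref{lemma:pauli_measure_trace} together with the remark following it, $\trc(P\rho) \in \{-1, 0, 1\}$, with value $+1$ exactly when $P$ lies in the stabilizer group $S$ of $\rho$, value $-1$ exactly when $P \in -S := \{-Q : Q \in S\}$, and $0$ otherwise. Since $S$ is an order-$2^n$ abelian subgroup of $\mathcal{P}_n$ not containing $-I^{\otimes n}$, the sets $S$ and $-S$ are disjoint, so $f_\rho$ is supported (as a function on $\Epauli$) on exactly $2^{n+1}$ points. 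Because $D$ is the uniform distribution on the $2^{2n+1}$ measurements $\{E_P : P \in \mathcal{P}_n\}$, this immediately gives the diagonal claim $\|f_\rho\|_D^2 = \inn{f_\rho,f_\rho}_D = 2^{n+1}/2^{2n+1} = 2^{-n}$.

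For distinct $\rho, \rho'$ with stabilizer groups $S, S'$, I would write $\inn{f_\rho, f_{\rho'}}_D = 2^{-(2n+1)} \sum_{P \in \mathcal{P}_n} \trc(P\rho)\,\trc(P\rho')$ and split the support of the summand into the four pairwise disjoint sets $S \cap S'$, $S \cap (-S')$, $(-S) \cap S'$, and $(-S) \cap (-S')$, on which the summand equals $+1$, $-1$, $-1$, and $+1$ respectively. The negation map gives bijections $(-S)\cap(-S') = -(S \cap S')$ and $(-S)\cap S' = -(S \cap (-S'))$, so the sum simplifies to $2|S \cap S'| - 2|S \cap (-S')|$, whence $|\inn{f_\rho, f_{\rho'}}_D| = 2^{-2n}\big|\,|S \cap S'| - |S \cap (-S')|\,\big|$.

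It remains to bound $\big|\,|S \cap S'| - |S \cap (-S')|\,\big|$ by $2^{n-1}$, which is the one step needing genuine care. I would note that $S' \cup (-S') = \langle S', -I^{\otimes n}\rangle$ is a group, so $H := S \cap (S' \cup (-S'))$ is a subgroup of $S$, and $H$ is the disjoint union of $S \cap S'$ and $S \cap (-S')$. Define $\psi : H \to \{+1, -1\}$ by $\psi(h) = +1$ if $h \in S'$ and $\psi(h) = -1$ if $h \in -S'$; a short case check on products shows $\psi$ is a homomorphism with $\ker\psi = S \cap S'$. Hence either $\psi$ is trivial, in which case $S \cap (-S') = \emptyset$ and $H = S \cap S'$, so $|S \cap S'| \le 2^{n-1}$ by \cref{prop:stabilizer_intersection} (using $\rho \ne \rho'$); or $\psi$ is surjective, in which case $|S \cap S'| = |S \cap (-S')| = |H|/2$ and the difference is $0$. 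Either way the bound $|\inn{f_\rho, f_{\rho'}}_D| \le 2^{-(n+1)}$ follows. For tightness I would take $\rho = \ketbra{0^n}{0^n}$ with stabilizer $\langle Z_1, \dots, Z_n\rangle$ and $\rho' = \ketbra{+}{+} \otimes \ketbra{0^{n-1}}{0^{n-1}}$ with stabilizer $\langle X_1, Z_2, \dots, Z_n\rangle$: every element of either group carries a $+$ sign, so $S \cap (-S') = \emptyset$, while $S \cap S' = \langle Z_2, \dots, Z_n\rangle$ has size exactly $2^{n-1}$, so equality holds.
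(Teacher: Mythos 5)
Your proposal is correct and follows essentially the same route as the paper: rewrite $f_\rho(E_P)=\trc(P\rho)$, reduce the inner product to counting $|S\cap S'|$ and $|S\cap(-S')|$, invoke \cref{prop:stabilizer_intersection}, and exhibit the same pair of states (a computational basis state versus the same state with one qubit replaced by $\ketbra{+}{+}$) for tightness. The one place you go beyond the paper is the homomorphism $\psi: S\cap(S'\cup -S')\to\{\pm 1\}$ showing that $S\cap(-S')$ is either empty or a coset of $S\cap S'$ of equal size; the paper instead bounds the bracket only from above via $|S\cap S'|\leq 2^{n-1}$ and $|S\cap(-S')|\geq 0$, which tacitly assumes the difference is nonnegative, and your coset argument is exactly what justifies that (and hence the bound on the absolute value), so this extra care is a genuine, if small, improvement in rigor rather than a different proof.
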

\begin{proof}
	Let $\rho, \rho' \in \cC$. Let $S$ and $S'$ be the stabilizer groups for $\rho$ and $\rho'$ respectively, and also let $-S = \{-P \mid P \in S\}$ and $-S' = \{-P \mid P \in S'\}$. For any Pauli measurement $(I + P)/2$, we have $f_\rho(\frac{I+P}{2}) = 2\trc(\frac{I+P}{2}\rho) - 1 = \trc(P \rho)$. Thus the correlation of the two p-concepts becomes \[ |\inn{f_\rho, f_{\rho'}}_D| = \frac{1}{|\mathcal{P}_n|} \left| \sum_{P \in \mathcal{P}_n} \trc(P\rho)\cdot \trc(P \rho') \right|. \] Leveraging \cref{lemma:pauli_measure_trace}, we find:
	\begin{align*}
		|\inn{f_\rho, f_{\rho'}}_D| &= \frac{1}{|\mathcal{P}_n|} \left| \sum_{P \in \mathcal{P}_n} \trc(P\rho)\cdot \trc(P \rho') \right| \\
		&= \frac{1}{2 \cdot 4^n} \left[ |S \cap S'| + |-S \cap -S'| - |S \cap -S'| - |-S \cap S'| \right] \\
		&= \frac{1}{4^n} \left[ |S \cap S'| - |S \cap (-S')| \right]
	\end{align*}
	
	If $\rho = \rho'$ then $S = S'$ such that $|S \cap S'| = 2^n$ and $|S \cap -S'| = 0$. Thus for all $\rho$
	\[
	|\inn{f_\rho, f_{\rho}}_D|  = \frac{2^n}{4^n} = \frac{1}{2^n}
	\]
	
	If $\rho \neq \rho'$ then by \cref{prop:stabilizer_intersection} $|S \cap S'| \leq 2^{n-1}$. $|S \cap -S'| \geq 0$ trivially, so we get an upper bound of
	\[
	|\inn{f_\rho, f_{\rho'}}_D|  \leq \frac{1}{4^n}(2^{n-1}) = \frac{1}{2^{n+1}}.
	\]
	
	We can show that this inequality is tight because the state $\rho = \ketbra{0}{0}^{\otimes n}$ and $\rho' = \ketbra{0}{0}^{\otimes n-1} \otimes \ketbra{+}{+}$ saturate this inequality. The generators for the stabilizer group of $\rho$ are (omitting tensor products): $ZIII\cdots I$, $IZIII\cdots I$, ... , and $IIII\cdots IZ$. The generator of $\rho'$ are the same, except the last generator is replaced with $IIII\cdots IX$. We see that $|S\cap S'| = 2^{n-1}$ while $|S \cap -S'| = 0$.
	
\end{proof}

With this result, we can use \cref{lemma:sda} to compute the $\sda$ and by extension prove a lower bound on the number of statistical queries needed to learn this concept class under this distribution.

\begin{proposition}[\cite{aaronson2004improved} Proposition 2]\label{prop:num_stabilizer}
	The number of $n$-qubit stabilizer states grows as $2^{\Theta(n^2)}$.
\end{proposition}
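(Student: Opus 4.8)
The plan is to prove the two directions separately, since a $2^{\Theta(n^2)}$ count follows from a $2^{O(n^2)}$ upper bound together with a $2^{\Omega(n^2)}$ lower bound.

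For the upper bound, I would use that a stabilizer state $\rho$ is uniquely determined by its stabilizer group $S \subset \mathcal{P}_n$ (an Abelian group of order $2^n$), and that $S$ is generated by at most $n$ of its elements. Since $|\mathcal{P}_n| = 2 \cdot 4^n$, the number of ordered $n$-tuples drawn from $\mathcal{P}_n$ is $(2 \cdot 4^n)^n = 2^{2n^2 + n}$, and every stabilizer group arises as the span of at least one such tuple; hence the number of stabilizer states is at most $2^{O(n^2)}$.

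For the lower bound, I would exhibit an explicit family of size $2^{\Omega(n^2)}$, the simplest choice being the graph states. For a graph $G$ on vertex set $[n]$, the state $\ket{G} = \bigl(\prod_{\{i,j\} \in E(G)} CZ_{ij}\bigr)\ket{+}^{\otimes n}$ is a stabilizer state, stabilized by the group $S_G$ generated by $\{K_v := X_v \prod_{w\,:\,\{v,w\} \in E(G)} Z_w\}_{v \in [n]}$. There are $2^{\binom{n}{2}} = 2^{\Omega(n^2)}$ labeled graphs on $[n]$, so it suffices to check that $G \neq G'$ implies $\ket{G} \neq \ket{G'}$, which I would do by recovering $G$ from $S_G$: a direct computation shows that the $X$-weight of $\prod_{v \in T} K_v$ (the number of tensor factors equal to $X$ or $Y$) equals $|T|$, so the $X$-weight-one elements of $S_G$ are exactly $\{K_v : v \in [n]\}$, and from each $K_v$ one reads off the neighbourhood of $v$; thus $S_G$, and hence $\ket{G}$, determines $G$.

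The main obstacle is precisely this last bookkeeping step — verifying that no ``accidental'' low-weight elements appear in $S_G$ and that the weight-one elements really do encode the adjacency structure. (If one instead wants the exact constant, one replaces the graph-state argument by the standard symplectic count: stabilizer groups modulo sign are in bijection with Lagrangian subspaces of $\mathbb{F}_2^{2n}$, of which there are $\prod_{k=1}^n (2^k + 1)$, and each lifts to exactly $2^n$ sign-consistent groups avoiding $-I^{\otimes n}$, giving $2^n \prod_{k=1}^n (2^k+1)$ stabilizer states; since $2^{n(n+1)/2} \le \prod_{k=1}^n (2^k+1) \le 2^{n(n+1)/2} \prod_{k \ge 1}(1 + 2^{-k})$ and the infinite product converges to a constant, this is $\Theta\bigl(2^{n(n+3)/2}\bigr) = 2^{\Theta(n^2)}$. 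The obstacle there shifts to justifying the sign-lifting count.)
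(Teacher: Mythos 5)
Your proof is correct, but note that the paper does not prove this statement at all: it is imported verbatim as Proposition 2 of the cited Aaronson--Gottesman work, which establishes the exact count $2^n \prod_{k=1}^n (2^k+1)$ (your parenthetical symplectic/sign-lifting route is essentially their argument, and the sign-lifting step is the standard fact that the $2^n$ sign choices on a generating set of a Lagrangian subspace of $\mathbb{F}_2^{2n}$ yield the $2^n$ joint eigenbases states, each avoiding $-I^{\otimes n}$). Your main argument is a genuinely different, self-contained route to the weaker asymptotic statement, which is all the paper actually uses (only $|\cC| = 2^{\Theta(n^2)}$ enters the SDA bound): the upper bound by counting ordered $n$-tuples of generators in $\mathcal{P}_n$ is fine (the stabilizer group is an elementary abelian $2$-group of order $2^n$, hence $n$-generated), and the graph-state lower bound is also sound --- the $X$-support of $\prod_{v\in T} K_v$ is exactly $T$, so the elements with a single $X$/$Y$ tensor factor are precisely the $K_v$, whose $Z$-supports recover the adjacency structure, giving $2^{\binom{n}{2}}$ distinct stabilizer groups and hence distinct states. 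What the citation buys is the exact constant in the exponent; what your elementary argument buys is independence from the symplectic machinery, at the cost of only matching the $\Theta(n^2)$ exponent up to constants.
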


\begin{theorem}
	Let $D$ be the uniform distribution over $n$-qubit Pauli measurements and let $\cC$ be the concept class of all $n$-qubit stabilizer pure states. Then $\sda(\cC, \frac{1}{2^{n}}) = 2^{\Theta(n^2)}$.
\end{theorem}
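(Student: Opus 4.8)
The plan is to read this off directly from \cref{lemma:sda}, the generic lower bound on $\sda$ in terms of pairwise correlations, feeding in the correlation estimates of \cref{lemma:stabilizer_correlations} together with the count of stabilizer states from \cref{prop:num_stabilizer}. Observe that \cref{lemma:stabilizer_correlations} provides exactly the two inputs \cref{lemma:sda} requires: a uniform bound $\kappa$ on the squared norms $\|f_\rho\|_D^2$, and a uniform bound $\gamma$ on the off-diagonal correlations $|\inn{f_\rho, f_{\rho'}}_D|$. Since $\|f_\rho\|_D^2 = \tfrac{1}{2^n}$ for every $\rho \in \cC$ and $|\inn{f_\rho, f_{\rho'}}_D| \le \tfrac{1}{2^{n+1}}$ whenever $\rho \neq \rho'$, I would take $\kappa = \tfrac{1}{2^n}$ and $\gamma = \tfrac{1}{2^{n+1}}$, noting that $\kappa - \gamma = \tfrac{1}{2^{n+1}} > 0$, so that \cref{lemma:sda} is applicable.

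Next I would choose the free parameter $\gamma'$ so that the threshold lands exactly on $\tfrac{1}{2^n}$: take $\gamma' = \tfrac{1}{2^{n+1}} > 0$, so $\gamma + \gamma' = \tfrac{1}{2^n}$. Then \cref{lemma:sda} yields
\[ \sda\bigl(\cC, \tfrac{1}{2^n}\bigr) \;\ge\; |\cC| \cdot \frac{\gamma'}{\kappa - \gamma} \;=\; |\cC| \cdot \frac{1/2^{n+1}}{1/2^{n+1}} \;=\; |\cC|, \]
and \cref{prop:num_stabilizer} gives $|\cC| = 2^{\Theta(n^2)}$, so $\sda(\cC, \tfrac{1}{2^n}) \ge 2^{\Omega(n^2)}$. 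The matching upper bound $\sda(\cC, \tfrac{1}{2^n}) \le 2^{O(n^2)}$ is immediate, since the statistical dimension on average never needs to exceed the size of the class $|\cC| = 2^{O(n^2)}$. Combining the two bounds gives $\sda(\cC, \tfrac{1}{2^n}) = 2^{\Theta(n^2)}$.

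I do not anticipate a genuine obstacle here, since the substantive work --- bounding the pairwise correlations of stabilizer p-concepts and counting the stabilizer states --- is already done in \cref{lemma:stabilizer_correlations,prop:num_stabilizer}. What needs care is purely the bookkeeping: matching $\kappa$, $\gamma$, $\gamma'$ to the roles they play in \cref{lemma:sda}, checking $\gamma' > 0$, and verifying that $\gamma + \gamma'$ is exactly the target $\tfrac{1}{2^n}$ so that the factor $\gamma'/(\kappa - \gamma)$ evaluates to $1$ rather than collapsing to something subconstant. It is worth remarking that because the resulting bound already matches $|\cC|$, the \emph{tightness} claim in \cref{lemma:stabilizer_correlations} is not needed for this particular statement.
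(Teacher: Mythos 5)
Your lower-bound argument is exactly the paper's proof: apply \cref{lemma:sda} with $\kappa = \tfrac{1}{2^n}$, $\gamma = \tfrac{1}{2^{n+1}}$ (from \cref{lemma:stabilizer_correlations}) and $\gamma' = \tfrac{1}{2^{n+1}}$, so the threshold is $\tfrac{1}{2^n}$ and the bound is $|\cC| = 2^{\Theta(n^2)}$ by \cref{prop:num_stabilizer}. You are also right that the tightness claim in \cref{lemma:stabilizer_correlations} plays no role here.

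The one place you go beyond the paper is the sentence claiming the matching upper bound is ``immediate, since the statistical dimension on average never needs to exceed $|\cC|$.'' That is not a consequence of the paper's definition: $\sda_D(\cC,\gamma)$ is the largest $d$ such that \emph{every} subset $\cC'$ with $|\cC'| \geq |\cC|/d$ has $\rho_D(\cC') \leq \gamma$, and nothing caps $d$ at $|\cC|$ --- if all nonempty subsets, including singletons, satisfy the bound, the condition holds for every $d$ and the quantity is unbounded. That is in fact what happens at this exact threshold: a singleton has $\rho_D(\{f_\rho\}) = \|f_\rho\|_D^2 = \tfrac{1}{2^n} \leq \tfrac{1}{2^n}$, and a subset of size $m \geq 2$ has $\rho_D(\cC') \leq \tfrac{1}{m}\cdot\tfrac{1}{2^n} + \tfrac{m-1}{m}\cdot\tfrac{1}{2^{n+1}} \leq \tfrac{1}{2^n}$, so taken literally $\sda(\cC,\tfrac{1}{2^n})$ is infinite and no upper bound of the form $2^{O(n^2)}$ holds. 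This is really a looseness in the theorem's ``$=$'' rather than a defect of your argument: the paper's own proof likewise establishes only the lower bound, and only the lower bound is used downstream (in \cref{thm:sq_lower_bound} and \cref{cor:stab-maxmixed-sq}). So your substantive argument is correct and identical to the paper's; just drop or rephrase the upper-bound claim, since the justification offered for it is false as a general fact about $\sda$.
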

\begin{proof}
	By \cref{prop:num_stabilizer}, $|\cC| = 2^{\Theta(n^2)}$.
	Using \cref{lemma:sda} with $\kappa = \frac{1}{2^n}$ and $\gamma = \frac{1}{2^{n+1}}$ as calculated from \cref{lemma:stabilizer_correlations}, we find that 
	\[
	\sda(\cC, \gamma' + \frac{1}{2^{n+1}}) = \sda(\cC, \gamma' + \gamma) \geq |\cC|\frac{\gamma'}{\beta - \gamma} =  2^{\Theta(n^2)}\gamma' 2^{n+1} = 2^{\Theta(n^2)}\gamma'
	\] Setting $\gamma' = \frac{1}{2^{n+1}}$ gives the result.
\end{proof}

\begin{corollary}\label{cor:stab-maxmixed-sq}
	Any SQ algorithm needs at least $2^{\Omega(n^2)}$ statistical queries of tolerance $\tau = 2^{-O(n)}$ to learn $\cC$ up to error $2^{-O(n)}$ over $D$.
\end{corollary}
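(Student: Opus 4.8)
The plan is to derive \cref{cor:stab-maxmixed-sq} by feeding the preceding theorem, which establishes $\sda(\cC, 1/2^n) = 2^{\Theta(n^2)}$, into the generic SQ lower bound of \cref{thm:sq_lower_bound}; the entire argument amounts to instantiating the parameters of the latter correctly.

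First I would fix the parameters. By \cref{lemma:stabilizer_correlations}, every p-concept $f_\rho \in \cC$ satisfies $\|f_\rho\|_D^2 = 1/2^n$, so we may take the norm lower bound of \cref{thm:sq_lower_bound} to be $\beta = 2^{-n/2}$. Choose the tolerance $\tau = 2^{-n/2}$, so that $\tau^2 = 1/2^n$ and the preceding theorem gives exactly $\sda_D(\cC, \tau^2) = \sda_D(\cC, 1/2^n) = 2^{\Theta(n^2)}$. For the target error, take $\epsilon = \sqrt{\beta/3} = 2^{-n/4}/\sqrt{3}$, the largest value \cref{thm:sq_lower_bound} allows.

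Next I would check the two technical hypotheses of \cref{thm:sq_lower_bound}: $\epsilon^2 \leq \beta/3$ holds with equality by the choice of $\epsilon$, and $\tau \leq \epsilon$ holds since $2^{-n/2} \leq 2^{-n/4}/\sqrt{3}$ for all sufficiently large $n$. \cref{thm:sq_lower_bound} then directly gives that any SQ learner whose queries have tolerance at least $\tau = 2^{-O(n)}$ needs at least $\sda_D(\cC, \tau^2) = 2^{\Omega(n^2)}$ of them to learn $\cC$ under $D$ up to squared loss $\epsilon = 2^{-O(n)}$. Finally, since any learner that achieves a smaller squared loss a fortiori achieves squared loss $\epsilon$, the same $2^{\Omega(n^2)}$ query lower bound (at tolerance $\tau = 2^{-n/2}$) holds for learning $\cC$ to any error $2^{-O(n)}$ whatsoever, in particular to error below the $\Theta(2^{-n})$ achieved by the maximally mixed state; this is the ``outperforming the maximally mixed state'' reading of the statement.

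I do not expect a substantive obstacle: all of the content sits in the pairwise-correlation estimate of \cref{lemma:stabilizer_correlations} and the stabilizer count \cref{prop:num_stabilizer}, which together drive the preceding theorem. The one point needing a moment's thought is the choice of $\tau$: \cref{thm:sq_lower_bound} forces $\tau \leq \epsilon \leq \sqrt{\beta/3} = 2^{-\Theta(n)}$, while for $\sda_D(\cC, \tau^2)$ to stay $2^{\Omega(n^2)}$ we need $\tau^2$ at or above the off-diagonal correlation scale $\Theta(2^{-n})$ from \cref{lemma:stabilizer_correlations} (below that scale \cref{lemma:sda} no longer yields a useful bound). Taking $\tau = 2^{-n/2}$ meets both constraints at once, which is precisely why the tolerance in the statement is $2^{-O(n)}$ and not smaller.
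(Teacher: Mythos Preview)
Your proposal is correct and follows exactly the paper's approach: the paper's entire proof is the one-liner ``Simply apply \cref{thm:sq_lower_bound}, with $\beta = 2^{-n}$,'' and you have simply unpacked this instantiation carefully (your $\beta = 2^{-n/2}$ matches the norm bound $\|c\|_D \geq \beta$ as stated in \cref{thm:sq_lower_bound}; the paper's $\beta = 2^{-n}$ appears to refer to the squared norm, but either reading yields the same $2^{-O(n)}$ conclusions). Your additional discussion of why $\tau$ must sit at the $2^{-\Theta(n)}$ scale and the a fortiori extension to smaller target errors are helpful elaborations but not new ingredients.
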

\begin{proof}
	Simply apply \cref{thm:sq_lower_bound}, with $\beta = 2^{-n}$.
\end{proof}

Since the norms of our p-concepts are exponentially small (i.e. $2^{-n/2}$), we only get hardness for error on the order of $2^{-O(n)}$. But as we now show, the p-concept norm corresponds almost exactly to the squared loss achieved by the maximally mixed state. Our results show that doing significantly better than the maximally mixed state requires $2^{\Omega(n^2)}$ statistical queries even when the tolerance is exponentially small.

\begin{proposition}\label{prop:maximally_mixed_error}
	Let $D$ be the uniform distribution over $n$-qubit Pauli measurements, $\Epauli$. Let $\rho$ be any state, and let $I/2^n$ be the maximally mixed state. Then \[ \|f_\rho\|_D^2 = \|f_\rho - f_{I/2^n}\|_D^2 + 1/4^n. \]
\end{proposition}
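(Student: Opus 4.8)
The plan is to recognize the identity as a Pythagorean relation. Write $f_\rho = (f_\rho - f_{I/2^n}) + f_{I/2^n}$; I will show the two summands are orthogonal in $L^2(D,\E)$ and that $\|f_{I/2^n}\|_D^2 = 1/4^n$, from which the claim follows immediately by expanding the norm.

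First I would compute $f_{I/2^n}$ on Pauli measurements. For any state $\sigma$ and any $P \in \mathcal{P}_n$, $f_\sigma((I+P)/2) = 2\trc(\tfrac{I+P}{2}\sigma) - 1 = \trc(P\sigma)$, so $f_{I/2^n}((I+P)/2) = \trc(P)/2^n$. Every $P \in \mathcal{P}_n$ with $P \neq \pm I^{\otimes n}$ is traceless (half of its eigenvalues are $+1$ and half $-1$), so $f_{I/2^n}$ vanishes on all Pauli measurements except $(I + I^{\otimes n})/2 = I$ and $(I - I^{\otimes n})/2 = 0$, on which it equals $\trc(\pm I^{\otimes n})/2^n = \pm 1$. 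Since $|\mathcal{P}_n| = 2\cdot 4^n$, this gives \[ \|f_{I/2^n}\|_D^2 = \frac{1}{|\mathcal{P}_n|}\big(1^2 + (-1)^2\big) = \frac{1}{4^n}. \]

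Next I would note that the same two measurements are precisely where $f_\rho - f_{I/2^n}$ vanishes: for any state $\rho$, $f_\rho((I \pm I^{\otimes n})/2) = \trc(\pm I^{\otimes n}\rho) = \pm\trc(\rho) = \pm 1$, which cancels $f_{I/2^n}$ there. Hence $f_\rho - f_{I/2^n}$ is supported on $\{(I+P)/2 : P \neq \pm I^{\otimes n}\}$ and $f_{I/2^n}$ on its complement; their pointwise product is identically zero, so $\inn{f_\rho - f_{I/2^n},\, f_{I/2^n}}_D = 0$. Expanding $\|f_\rho\|_D^2 = \|f_\rho - f_{I/2^n}\|_D^2 + 2\inn{f_\rho - f_{I/2^n},\, f_{I/2^n}}_D + \|f_{I/2^n}\|_D^2$ and substituting the two facts above finishes the proof. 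Equivalently, one can expand $\|f_\rho - f_{I/2^n}\|_D^2$ directly and compute $\inn{f_\rho, f_{I/2^n}}_D = \frac{1}{|\mathcal{P}_n|}(1+1) = 1/4^n$ from the same observation about the identity Paulis.

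There is no substantive obstacle here; the only care needed is the bookkeeping around the two exceptional Pauli operators $\pm I^{\otimes n}$ — the one place where $f_\rho$ is pinned to $\pm 1$ independently of $\rho$ — and the normalization $|\mathcal{P}_n| = 2\cdot 4^n$ built into the uniform distribution $D$.
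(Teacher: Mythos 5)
Your proposal is correct, and it rests on exactly the same observations as the paper's proof: $f_{I/2^n}$ vanishes on every Pauli measurement except the two trivial ones $\{0, I\}$ (where $f_\rho$ agrees with it and equals $\pm 1$), and $|\Epauli| = 2\cdot 4^n$. Packaging the bookkeeping as a Pythagorean identity via disjoint supports rather than splitting the sum directly, as the paper does, is only a cosmetic difference.
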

\begin{proof}
	In essence, this is simply because the p-concept $f_{I/2^n}$ is almost always zero. Specifically, for all $E \in \Epauli \setminus \{0, I\}$, $f_{I/2^n}(E) = 2\trc(E/2^n) - 1 = 0$, since $\trc(E) = 2^{n-1}$ for all such $E$. (This is because $E = (I + P)/2$ for some $n$-qubit Pauli matrix $P \in \cP_n$, and $\trc(P) = 0$ for all $P \in \cP_n \setminus \{\pm I\}$.) As for $E \in \{0, I\}$, we note that $f_\rho(E) = f_{I/2^n}(E)$. Thus
	
	\begin{align*}
		\|f_\rho\|_D^2 &= \frac{1}{|\Epauli|} \sum_{E \in \Epauli} f_\rho(E)^2 \\
		&= \frac{1}{|\Epauli|} \left( \sum_{E \in \Epauli \setminus \{0, I\}} f_\rho(E)^2 + \sum_{E \in \{0, I\}} f_\rho(E)^2 \right) \\
		&= \frac{1}{|\Epauli|} \left( \sum_{E \in \Epauli \setminus \{0, I\}} (f_\rho(E) - f_{I/2^n}(E))^2 + \sum_{E \in \{0, I\}} f_\rho(E)^2 \right) \\
		&= \frac{1}{|\Epauli|} \left( \sum_{E \in \Epauli} (f_\rho(E) - f_{I/2^n}(E))^2 + \sum_{E \in \{0, I\}} f_\rho(E)^2 \right) \\
		&= \|f_\rho - f_{I/2^n}\|_D^2 + \frac{2}{|\Epauli|} \\
		&= \|f_\rho - f_{I/2^n}\|_D^2 + 1/4^n.
	\end{align*}
\end{proof}

\subsection{Lower bounds via a direct reduction from learning parities}
To get around this norm issue, we look at a subset of stabilizer states such that we can produce p-concepts with norm 1. Recall that the Pauli measurements are the set of all projectors onto the eigenvalue-1 space of some Pauli matrix $P$, i.e.\ $\{\frac{P + I}{2} \mid P \in \mathcal{P}_n\}$. We define a subset of the Pauli measurements called the parity measurements, and show the hardness of SQ-learning stabilizer states under the uniform distribution on such measurements. This is via a simple equivalence, holding essentially by construction, with the problem of learning parities under the uniform distribution. As a further consequence, we obtain that learning stabilizer states with noise is at least as hard as Learning Parity with Noise (LPN). This holds for general PAC-learning, even outside the SQ model.

\begin{definition}[Parity measurements]
	For all $x \in \{0, 1\}^n$, let $P_x = \sum_{y \in \{0, 1\}^n} \chi_x(y) \ketbra{y}{y}$. Since the set of $P_x$ is equivalent to $\{I, Z\}^{\otimes n}$, the corresponding measurement $E_x = \frac{P_x + I}{2}$ is by definition a Pauli measurement. We will refer to such measurements as \textit{parity measurements}.
\end{definition}

\begin{proposition}\label{prop:parity_hardness}
For every distribution $D$ on $\{0,1\}^n$ there is a corresponding distribution $D'$ on parity measurements such that learning computational basis states under $D'$ is equivalent to learning parities under $D$. Furthermore, this equivalence holds even with classification noise: for any $\eta$, learning computational basis states under $D'$ with noise rate $\eta$ is equivalent to learning parities under $D$ with noise rate $\eta$.

In particular, learning stabilizer states under $D'$ is at least as hard as learning parities under $D$.
\end{proposition}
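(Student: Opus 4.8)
The plan is to exhibit a label-preserving bijection between the computational basis states $\{\ketbra{y}{y} : y \in \{0,1\}^n\}$ and the parity functions $\{\chi_y : y \in \{0,1\}^n\}$, under which parity measurements applied to basis states behave \emph{exactly} like evaluations of parity functions, and then to chase the definitions of the various learning models (examples, statistical queries, classification noise) through this correspondence.

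First I would record the key identity. For a basis state $\ketbra{y}{y}$ and a parity measurement $E_x = (P_x + I)/2$ we have $\trc(E_x\ketbra{y}{y}) = \tfrac12(\bra{y}P_x\ket{y} + 1) = \tfrac12(\chi_x(y) + 1)$, so the conditional mean function is $f_{\ketbra{y}{y}}(E_x) = 2\trc(E_x\ketbra{y}{y}) - 1 = \chi_x(y) = \chi_y(x)$, which is deterministically $\pm 1$: measuring $\ketbra{y}{y}$ with $E_x$ returns $\chi_y(x)$ with no randomness. Next, given a distribution $D$ on $\{0,1\}^n$ (identified with $\{-1,1\}^n$ in the usual bitwise way, and with $y$ identified with the subset of $[n]$ it indicates), let $D'$ be the pushforward of $D$ under the injective map $x \mapsto E_x$. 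Then, under the identifications $E_x \leftrightarrow x$ and $\ketbra{y}{y} \leftrightarrow \chi_y$, the labeled-example distribution $\{(E, Y) : E \sim D',\, Y \sim f_{\ketbra{y}{y}}(E)\}$ for learning $\ketbra{y}{y}$ under $D'$ is literally the labeled-example distribution $\{(x, \chi_y(x)) : x \sim D\}$ for learning the parity $\chi_y$ under $D$. The same identification carries statistical queries across — a query $\phi(E,Y)$ on the state side corresponds to the query $(x, Y) \mapsto \phi(E_x, Y)$ on the parity side, with identical expectation and tolerance — and it commutes with classification noise, since flipping the $\pm1$ label with probability $\eta$ is the very same operation on both sides. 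This already gives the claimed equivalence of learning problems at the level of (possibly noisy) oracles.

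It then remains to match up the success criteria, since a state learner is graded by squared loss $\ex_{E \sim D'}[(f_\sigma(E) - f_\rho(E))^2]$ whereas a parity learner is graded by classification error $\pr_{x \sim D}[h(x) \neq \chi_y(x)]$. For the direction that matters — converting a learner for stabilizer (or basis) states into a parity learner — I would take the output state $\sigma$ and set $h(x) = \sgn(f_\sigma(E_x))$; since $f_{\ketbra{y}{y}}(E_x) \in \{-1,1\}$, every $x$ on which $h$ errs contributes at least $1$ to the squared loss, so $\pr_{x\sim D}[h(x) \neq \chi_y(x)] \le \ex_{E\sim D'}[(f_\sigma(E) - f_{\ketbra{y}{y}}(E))^2] \le \epsilon$. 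In the reverse direction, a proper parity learner outputting $\chi_{\hat y}$ gives the state $\sigma = \ketbra{\hat y}{\hat y}$, whose squared loss is $\ex_{x\sim D}[(\chi_{\hat y}(x) - \chi_y(x))^2] \le 4\,\pr_{x\sim D}[\chi_{\hat y}(x) \neq \chi_y(x)]$. Finally, each computational basis state $\ketbra{y}{y}$ is a stabilizer state (stabilized by the order-$2^n$ abelian group generated by the $(-1)^{y_i} Z_i$, none of which multiply to $-I^{\otimes n}$), and each $E_x$ is a genuine Pauli measurement by the definition of parity measurements; hence any learner for stabilizer states under $D'$ is in particular a learner for basis states under $D'$, and composing with the sign-rounding reduction yields the ``in particular'' claim (noise-preservingly).

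I expect the only genuine subtlety to be this loss-metric bookkeeping: the sign-rounding step works precisely \emph{because} the target p-concept is $\{-1,1\}$-valued, and the reverse reduction needs the parity learner to be proper (or at least to produce a hypothesis one can round to a parity) — this is harmless for the hardness statements, which only use the forward direction. Everything else — injectivity of $x \mapsto E_x$, identity of the oracles, and commutation with classification noise — is routine definition-chasing.
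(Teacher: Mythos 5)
Your proposal is correct and follows essentially the same route as the paper: the same identification of parity examples $(x,\chi_y(x))$ with measurement--outcome pairs $(E_x, Y)$ for the basis state $\ketbra{y}{y}$, the same pushforward definition of $D'$, the same observation that classification noise commutes with the identification, and the same ``basis states are stabilizer states'' step. The only difference is that you spell out the squared-loss versus classification-error bookkeeping (sign-rounding in the forward direction, properness in the reverse), which the paper treats as immediate from the two problems being representations of one another.
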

\begin{proof}
	If the unknown state $\rho$ is a computational basis state $\ketbra{y}{y}$, then the value
	\[
	\trc(E_x \ketbra{y}{y}) = x \cdot y \mod 2
	\]
	is simply the parity of $x$ over the subset specified by $y$ (represented using $\{0, 1\}$ instead of $\{-1, 1\}$). In the PAC setting, this would be equivalent to getting the sample $(E_x, x\cdot y \mod 2)$. Accordingly, let us define $D'$ simply as the distribution over $E_x$ for $x \sim D$. It is clear that these are different representations of the same problem, such that a learning algorithm for one implies a learning algorithm for the other. We note that this relationship holds even in the presence of classification noise. Finally, note that computational basis states are a subset of the stabilizer states, so any learner for stabilizer states implies a learner for the computational basis states as well. This implies that learning stabilizer states on $D'$ is at least as hard as learning parities on $D$, even in the presence of classification noise.
\end{proof}

\begin{corollary}\label{cor:stab-sq-lower}
	SQ-learning stabilizer states under the uniform distribution over parity measurements requires $2^{\Omega(n)}$ queries even with constant error (say $1/3$). 
\end{corollary}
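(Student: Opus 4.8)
The plan is to obtain \cref{cor:stab-sq-lower} by combining \cref{prop:parity_hardness} with the classical SQ lower bound for learning parities under the uniform distribution, \cref{thm:kearns_parity}. Specialize \cref{prop:parity_hardness} by taking $D$ to be the uniform distribution on $\{0,1\}^n$; then the associated distribution $D'$ on parity measurements is precisely the uniform distribution over $\{E_x : x \in \{0,1\}^n\}$, since $x \mapsto E_x$ is a bijection. \cref{prop:parity_hardness} already tells us that learning stabilizer states under $D'$ is at least as hard as learning parities under $D$; what remains is to verify that this reduction lives entirely inside the SQ model, so that the lower bound of \cref{thm:kearns_parity} transfers with no loss in query complexity or tolerance, and to translate the squared-loss success criterion into a classification-error one.

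First I would spell out the SQ side of the reduction. A statistical query for the parity problem is a function $\phi : \{0,1\}^n \times \{-1,1\} \to [-1,1]$; composing with the bijection $x \mapsto E_x$ turns it into a query $\phi'(E_x, b) = \phi(x, b)$ for the stabilizer problem. Because the joint law of $(x, \chi_x(y))$ under $D$ matches the joint law of $(E_x, 2\trc(E_x \ketbra{y}{y}) - 1)$ under $D'$ up to the fixed relabeling (and a harmless global sign coming from the $\{0,1\}$-versus-$\{-1,1\}$ encoding, which can be absorbed into $\phi$), we have $\ex_{E \sim D', Y \sim f_\rho(E)}[\phi'(E, Y)] = \ex_{x \sim D}[\phi(x, \chi_x(y))]$ exactly. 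Hence the SQ oracle for the unknown stabilizer state answers, query-for-query and with the same tolerance $\tau$, the SQ oracle for the corresponding parity, so an SQ learner for stabilizer states using $q$ queries of tolerance $\tau$ yields an SQ learner for parities using $q$ queries of tolerance $\tau$.

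Next I would translate the success criterion. The stabilizer learner returns a state $\sigma$ with $\ex_{E \sim D'}[(f_\sigma(E) - f_\rho(E))^2] \le \epsilon$, where $\rho = \ketbra{y}{y}$ and $f_\rho(E_x) = 2(x \cdot y \bmod 2) - 1 \in \{-1, 1\}$ is, up to sign, the target parity. Define the Boolean hypothesis $h(x) = \sgn(f_\sigma(E_x))$. On any $x$ where $h$ misclassifies, $f_\sigma(E_x)$ has the sign opposite to the $\{-1,1\}$-valued $f_\rho(E_x)$, so $(f_\sigma(E_x) - f_\rho(E_x))^2 \ge 1$; therefore $\pr_{x \sim D}[h(x) \ne f_\rho(E_x)] \le \ex_{E \sim D'}[(f_\sigma(E) - f_\rho(E))^2] \le \epsilon$, and after correcting for the global sign this is a parity hypothesis of error at most $\epsilon$. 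Taking $\epsilon = 1/3$, an SQ algorithm learning stabilizer states under $D'$ to error $1/3$ with $q$ queries of tolerance $\tau$ would learn parities under the uniform distribution to error $1/3$ with $q$ queries of tolerance $\tau$, which by \cref{thm:kearns_parity} forces $q = 2^{\Omega(n)}$ even when $\tau = 2^{-O(n)}$.

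The only point requiring care — and the one place I would state things explicitly — is confirming that nothing in the reduction secretly blows up the query count or shrinks the tolerance: since the reduction is just a fixed relabeling of the domain together with the observation (from \cref{prop:parity_hardness}) that computational basis states form a subclass of the stabilizer states, this is immediate, but it is worth stating because an SQ lower bound is only meaningful as a joint statement about (number of queries, tolerance). There is no genuine computational obstacle here; the corollary is essentially a repackaging of \cref{prop:parity_hardness} and \cref{thm:kearns_parity}.
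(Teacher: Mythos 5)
Your proposal is correct and follows exactly the paper's route: specialize \cref{prop:parity_hardness} to the uniform distribution so that learning stabilizer states under the uniform distribution on parity measurements embeds learning parities under the uniform distribution, then invoke \cref{thm:kearns_parity}. The extra details you supply (the query relabeling preserving tolerance, and the conversion from squared loss to classification error via $h(x) = \sgn(f_\sigma(E_x))$, with the sign convention absorbed) are exactly the steps the paper leaves implicit, so this is the same argument spelled out more fully.
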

\begin{proof}
	By \cref{prop:parity_hardness}, SQ-learning stabilizer states under the uniform distribution on $E_x$ parity measurements is at least as hard as learning parities over the uniform distribution. Applying \cref{thm:kearns_parity}, we get the exponential lower bound.
\end{proof}

\begin{corollary}\label{cor:stab-lpn}
	Learning stabilizer states under the uniform distribution over parity measurements with classification noise rate $\eta$ is at least as hard as LPN with noise rate $\eta$.
\end{corollary}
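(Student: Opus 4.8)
The plan is to observe that this corollary is an immediate specialization of \cref{prop:parity_hardness} to the uniform distribution. First I would instantiate \cref{prop:parity_hardness} with $D$ taken to be the uniform distribution $\unif(\{0,1\}^n)$ over bit strings. The proposition then hands us a distribution $D'$ over parity measurements — namely the push-forward of $D$ under the map $x \mapsto E_x$ — such that learning computational basis states (hence, \emph{a fortiori}, stabilizer states, since the computational basis states are a subset of the stabilizer states) under $D'$ with classification noise at rate $\eta$ is at least as hard as learning parities under $D$ with classification noise at rate $\eta$.

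Second, I would check that this particular $D'$ is exactly the uniform distribution over parity measurements. This is because $x \mapsto P_x$, equivalently $x \mapsto E_x$, is a bijection from $\{0,1\}^n$ onto the set of parity measurements — as noted in the definition of parity measurements, the operators $P_x$ range over all of $\{I,Z\}^{\otimes n}$ — so the push-forward of the uniform distribution on $\{0,1\}^n$ is precisely the uniform distribution on parity measurements.

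Finally, I would invoke the definition of LPN recalled in \cref{sec:prelim}: the search version of LPN at noise rate $\eta$ is by definition the problem of learning parities under the uniform distribution on $\{0,1\}^n$ with classification noise at rate $\eta$. Chaining the three observations yields that learning stabilizer states under the uniform distribution over parity measurements with classification noise rate $\eta$ is at least as hard as LPN at rate $\eta$. There is no real obstacle here: the substantive content is entirely carried by \cref{prop:parity_hardness}, and the only thing left to verify is the routine push-forward/bijection claim identifying $D'$ with the uniform distribution on parity measurements.
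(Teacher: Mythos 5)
Your proposal is correct and matches the paper's own proof, which likewise obtains the corollary directly from \cref{prop:parity_hardness} by specializing to the uniform distribution and identifying the resulting distribution on parity measurements with the uniform one, then invoking the definition of LPN. Your write-up simply spells out the push-forward/bijection step that the paper leaves implicit.
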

\begin{proof}
	\cref{prop:parity_hardness} directly implies that learning computational basis states under the uniform distribution on parity measurements and with classification noise is equivalent to LPN.
\end{proof}

\section{An SQ learner for product states}\label{sec:sq-learner}
Turning to positive results, we now give SQ algorithms for some simple concept classes, namely the computational basis states and, more generally, products of $n$ single-qubit states. The distribution on measurements that we will consider will correspond to a natural scheme for these classes: pick a qubit at random and measure it using a Haar-random unitary.

Concretely, let $D'$ be the distribution of single qubit measurements formed from the projection onto Haar-random single qubit state (i.e. $U\ketbra{0}{0}U^\dagger$ where $U$ is a Haar random unitary), and let $D$ be the distribution on $n$-qubit measurements that corresponds to picking a qubit at random and measuring it using a measurement drawn from $D'$. That is, $D = \frac{1}{n} \sum_{i=1}^n I^{\otimes i - 1} \otimes D' \otimes I^{\otimes n-i}$. Let $\cC$ be the concept class of product states $\rho = \otimes_{i=1}^n \rho_i$. Of course, this class includes the computational basis states. The main result of this section will be a simple $O(n)$-query SQ algorithm for learning $\cC$ under the distribution $D$.

We remark that our algorithm's guarantee actually trivially extends to learning arbitrary (not just product) states under the above distribution $D$ of single-qubit Haar-random measurements. This is simply because such measurements only ever inspect each qubit individually, so that a product state $\otimes_i \rho_i$ is indistinguishable---\emph{under $D$}---from a more general mixed state $\rho$ whose reduced density matrix on qubit $i$ is $\rho_i$ for every $i$.\footnote{We stress that in the PAC formalism, the goal is not necessarily to learn the exact state, but simply to find one that behaves similarly under the specified input distribution of measurements. Thus for measurements of the kind drawn from $D$, learning the product of reduced density states is sufficient.} Yet since this distribution on measurements is fundamentally not very interesting for anything other than product states, we state the results in this section only for product states.

The following technical lemma will be the backbone of our results.

\begin{lemma} \label{lemma:haar_random_single}
	For any single qubit pure state $\ket{\psi}\bra{\psi} = \frac{I + P}{2}$ and mixed state $\rho$:
	\[
	\ex_{E \sim D'}\left[\sgn\left(\trc\left(E \ketbra{\psi}{\psi}\right) - \frac{1}{2}\right) \left(\trc\left(E \rho\right) - \frac{1}{2}\right)\right] = \frac{1}{4}\trc(P \rho).
	\]

\end{lemma}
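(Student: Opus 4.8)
The plan is to pass to the Bloch-sphere picture, where the Haar-random single-qubit measurement becomes a uniformly random point on the sphere $S^2$, and then exploit rotational symmetry.

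First I would write $E = \frac{I + \widehat n \cdot \vec\sigma}{2}$, where $\vec\sigma = (X, Y, Z)$ and $\widehat n$ is uniform on the unit sphere $S^2 \subset \R^3$; this is precisely the image of the Haar measure on $U$ under $U \mapsto U\ketbra{0}{0}U^\dagger$. Likewise write the given pure state as $\ketbra{\psi}{\psi} = \frac{I + \widehat m\cdot\vec\sigma}{2}$, so that $P = \widehat m\cdot\vec\sigma$, and write the mixed state as $\rho = \frac{I + \vec r\cdot\vec\sigma}{2}$ with $\lVert \vec r\rVert \le 1$. Using $\trc\big((\vec a\cdot\vec\sigma)(\vec b\cdot\vec\sigma)\big) = 2\,\vec a\cdot\vec b$ and $\trc(\vec a\cdot\vec\sigma) = 0$, a one-line computation gives
\[ \trc(E\ketbra{\psi}{\psi}) - \tfrac12 = \tfrac12\,\widehat n\cdot\widehat m, \qquad \trc(E\rho) - \tfrac12 = \tfrac12\,\widehat n\cdot\vec r, \qquad \trc(P\rho) = \widehat m\cdot\vec r. \]
Hence the left-hand side of the lemma equals $\tfrac12\,\ex_{\widehat n}\big[\sgn(\widehat n\cdot\widehat m)\,(\widehat n\cdot\vec r)\big]$, and it suffices to show $\ex_{\widehat n \sim \unif(S^2)}\big[\sgn(\widehat n\cdot\widehat m)\,\widehat n\big] = \tfrac12\,\widehat m$, since then the expression becomes $\tfrac12 \cdot \tfrac12\,\widehat m\cdot\vec r = \tfrac14\,\trc(P\rho)$.

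The key step is a symmetry argument. The vector-valued map $\widehat m \mapsto \ex_{\widehat n \sim \unif(S^2)}[\sgn(\widehat n\cdot\widehat m)\,\widehat n]$ commutes with rotations of $\R^3$ (the law of $\widehat n$ is rotation-invariant), so it must be of the form $c\,\widehat m$ for a scalar $c$ not depending on $\widehat m$. Taking the inner product with $\widehat m$ identifies $c = \ex_{\widehat n}[\lvert\widehat n\cdot\widehat m\rvert]$. Evaluating this is the only genuine calculation: by Archimedes' hat-box theorem, $\widehat n\cdot\widehat m$ is distributed uniformly on $[-1,1]$ when $\widehat n$ is uniform on $S^2$, so $c = \int_{-1}^{1}\lvert t\rvert\,\tfrac12\,dt = \tfrac12$ (alternatively, integrate $\lvert\cos\theta\rvert\sin\theta$ directly in spherical coordinates).

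I do not expect a serious obstacle; the substance lies entirely in (i) setting up the Bloch-vector dictionary correctly so that both traces linearize in $\widehat n$, and (ii) the symmetry-plus-hat-box evaluation of $\ex[\lvert\widehat n\cdot\widehat m\rvert]$. The one point to handle carefully is the degenerate event $\widehat n\cdot\widehat m = 0$, which has measure zero and hence does not affect the expectation whatever convention one adopts for $\sgn(0)$.
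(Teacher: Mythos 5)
Your proof is correct. The Bloch-vector dictionary is set up properly: with $E = \frac{I + \widehat n\cdot\vec\sigma}{2}$, $\ketbra{\psi}{\psi} = \frac{I+\widehat m\cdot\vec\sigma}{2}$, $\rho = \frac{I+\vec r\cdot\vec\sigma}{2}$, the identities $\trc(E\ketbra{\psi}{\psi})-\tfrac12 = \tfrac12\,\widehat n\cdot\widehat m$, $\trc(E\rho)-\tfrac12 = \tfrac12\,\widehat n\cdot\vec r$, $\trc(P\rho)=\widehat m\cdot\vec r$ all check out, the identification of $D'$ with the uniform measure on the Bloch sphere is exactly the one the paper itself uses, and the equivariance argument plus $\ex[\lvert\widehat n\cdot\widehat m\rvert]=\tfrac12$ (hat-box, or a one-line $\theta$ integral) closes the computation. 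Your route differs from the paper's in its mechanics: the paper eigendecomposes $\rho = \lambda\ketbra{\phi}{\phi}+(1-\lambda)\ketbra{\phi^\perp}{\phi^\perp}$ in a frame adapted to $\ket{\psi}$, writes out $\trc(E\ketbra{\psi}{\psi})$ and $\trc(E\rho)$ in spherical coordinates, and evaluates the full double integral explicitly, splitting the $\theta$ range according to $\sgn(\cos\theta)$, to land on $(2\lambda-1)\frac{\cos 2\theta'}{4}=\frac14\trc(P\rho)$. You avoid both the eigendecomposition and the two-dimensional integral by linearizing everything in $\widehat n$ and invoking rotational symmetry, reducing the problem to the scalar $\ex[\lvert\widehat n\cdot\widehat m\rvert]$. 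What your version buys is brevity and reusability (the equivariance lemma $\ex[\sgn(\widehat n\cdot\widehat m)\,\widehat n]=\tfrac12\widehat m$ handles any $\rho$ uniformly and would generalize to similar sign-weighted averages); what the paper's explicit computation buys is a self-contained calculation whose intermediate expressions (e.g.\ $\trc(E\rho)$ in spherical coordinates) are reused in the proof of \cref{lemma:error_product_state}. Your remark that $\sgn(0)$ lives on a measure-zero set is the right disposal of the only degenerate case.
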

\begin{proof}
	We will decompose $\rho = \lambda \ketbra{\phi}{\phi} + (1-\lambda)\ketbra{\phi^\perp}{\phi^\perp}$ such that $\ket{\phi} = \cos \theta' \ket{\psi} + \sin \theta' \ket{\psi^\perp}$ and $\ket{\phi^\perp} = e^{i \phi'}(\sin \theta' \ket{\psi} - \cos \theta' \ket{\psi^\perp})$. The following identity will be useful at the end:
	
	\begin{align*}
		\trc(P\rho) &= 2\trc(\ketbra{\psi}{\psi}\rho) - 1\\
		&= 2\big[\lambda \trc(\ketbra{\psi}{\psi}\ketbra{\phi}{\phi}) + (1-\lambda)\trc(\ketbra{\psi}{\psi}\ketbra{\phi^\perp}{\phi^\perp})\big] - 1\\
		&= 2\cos^2\theta' \lambda + 2(1-\lambda)\sin^2\theta' - (\sin^2 \theta' + \cos^2 \theta')\\
		&= (2\lambda - 1) \cos^2 \theta' - (2\lambda - 1)\sin^2 \theta'\\
		&= (2\lambda - 1)\cos 2\theta'
	\end{align*}

	Let $U$ be the unitary such that $U\ket{0} = \ket{\psi}$ and $U\ket{1} = \ket{\psi^\perp}$. Due to symmetry, we can parameterize a Haar-random single qubit state using spherical coordinates as $E = \frac{1}{2}U(I + \cos\phi \sin \theta X + \sin \phi \sin \theta Y + \cos \theta Z)U^\dagger$ for the Pauli matrices $X$, $Y$, and $Z$.
	
	\begin{align*}
		\trc\left(E \ketbra{\psi}{\psi}\right)
		&= \trc\left(\frac{1}{2}U(I + \cos\phi \sin \theta X + \sin \phi \sin \theta Y + \cos \theta Z)U^\dagger \ketbra{\psi}{\psi}\right)\\
		&= \trc\left(\frac{1}{2}(I + \cos\phi \sin \theta X + \sin \phi \sin \theta Y + \cos \theta Z) \ketbra{0}{0}\right)\\
		&= \frac{1 + \cos \theta}{2}
	\end{align*}
	
	We can also do the same thing for $\rho$:
	\begin{align*}
		&\trc\left(E \rho\right)\\
		&= \lambda \trc(E \ketbra{\phi}{\phi}) + (1-\lambda)\trc(E \ketbra{\phi^\perp}{\phi^\perp})\\
		&= \lambda\frac{1 + \cos \theta \cos 2\theta' + \cos \phi \sin \theta \sin 2\theta'}{2} + (1-\lambda)\frac{1 - \cos \theta \cos 2\theta' - \cos \phi \sin \theta \sin 2\theta'}{2}\\
		&= \frac{1 + (2\lambda - 1) (\cos \theta \cos 2\theta' + \cos \phi \sin \theta \sin 2\theta')}{2}
	\end{align*}
	
	This allows us to perform a spherical integral over $\theta$ and $\phi$ to get the expectation:
	
	\begin{align*}
		&\ex_{E \sim D'}\left[\sgn\left(\trc\left(E \ketbra{\psi}{\psi}\right) - \frac{1}{2}\right) \left(\trc\left(E \ketbra{\phi}{\phi}\right) - \frac{1}{2}\right)\right]\\
		&= \frac{1}{4\pi}\int_0^{2\pi} d\phi \int_0^\pi d\theta \sin \theta \sgn\left(\cos \theta\right) \left((2\lambda - 1)\frac{\cos \theta \cos 2\theta' + \cos \phi \sin \theta \sin 2\theta'}{2}\right)\\
		&= \frac{2\lambda - 1}{8\pi}\int_0^{2\pi} d\phi \left[\int_0^{\pi/2} -  \int_{\pi/2}^{\pi}\right] d\theta \sin \theta  \left(\cos \theta \cos 2\theta' + \cos \phi \sin \theta \sin 2\theta'\right)\\
		&= (2\lambda - 1)\frac{\pi \cos 2 \theta' + \pi \cos 2 \theta'}{8 \pi}\\
		&= (2\lambda - 1)\frac{\cos 2 \theta'}{4}\\
		&= \frac{1}{4}\trc(P \rho)
	\end{align*}
\end{proof}

Our algorithm for learning product states will be work by learning each qubit in the Pauli basis. This results in a $3n$-query algorithm, corresponding to the $3n$ parameters that it takes to define a product state. We first recall the definition of trace distance, which is the quantum generalization of total variational distance.

\begin{definition}[Trace distance]
	Given quantum states $\rho$ and $\sigma$,
	\[
	\lVert \rho - \sigma \rVert_{\trc} = \frac{1}{2}\sum_i \lvert \lambda_i \rvert
	\]
	where $\{\lambda_i\}$ is the set of eigenvalues of the matrix $\rho - \sigma$.
\end{definition}

\begin{proposition}[folklore]\label{prop:trace}
	Given two single qubit states $\rho$ and $\sigma$, the $\lVert \rho - \sigma \rVert_{\trc}$ is half the Euclidean distance between their points on the Bloch sphere.
\end{proposition}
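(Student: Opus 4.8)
The plan is to work in the Bloch-sphere parametrization. First I would write any single-qubit state as $\rho = \tfrac12(I + r_1 X + r_2 Y + r_3 Z)$, where $X, Y, Z$ are the single-qubit Paulis and $\vec r = (r_1, r_2, r_3) \in \R^3$ with $\|\vec r\| \le 1$ is by definition the Bloch vector of $\rho$ (the point on the Bloch sphere representing it); similarly write $\sigma$ with Bloch vector $\vec s$. Then $\rho - \sigma = \tfrac12\big((r_1 - s_1)X + (r_2 - s_2)Y + (r_3 - s_3)Z\big)$, so the whole problem reduces to understanding the spectrum of a traceless real-linear combination of Paulis.

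Next I would pin down the eigenvalues of $\rho - \sigma$. The key elementary fact is that for any real vector $\vec v$, the operator $M_{\vec v} := v_1 X + v_2 Y + v_3 Z$ satisfies $M_{\vec v}^2 = \|\vec v\|^2 I$, which follows immediately from the Pauli relations $XY + YX = YZ + ZY = ZX + XZ = 0$ and $X^2 = Y^2 = Z^2 = I$. Since $M_{\vec v}$ is also traceless, its two eigenvalues are forced to be $+\|\vec v\|$ and $-\|\vec v\|$. Applying this with $\vec v = \tfrac12(\vec r - \vec s)$ shows the eigenvalues of $\rho - \sigma$ are $\pm\tfrac12\|\vec r - \vec s\|$.

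Finally, plugging into the definition of trace distance gives $\|\rho - \sigma\|_{\trc} = \tfrac12\big(\tfrac12\|\vec r - \vec s\| + \tfrac12\|\vec r - \vec s\|\big) = \tfrac12\|\vec r - \vec s\|$, and $\|\vec r - \vec s\|$ is precisely the Euclidean distance between the two Bloch vectors, which is the claim. I do not expect any genuine obstacle here: this is a standard fact, and the only step requiring even a line of justification is the eigenvalue computation, which is immediate from the Pauli algebra once one observes that $M_{\vec v}$ squares to a scalar and has trace zero.
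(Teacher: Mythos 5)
Your proof is correct. The paper states this proposition as folklore and gives no proof of its own, so there is nothing to compare against; your argument is the standard one — write $\rho - \sigma = \tfrac12\big((r_1-s_1)X + (r_2-s_2)Y + (r_3-s_3)Z\big)$, use the Pauli anticommutation relations to conclude the eigenvalues are $\pm\tfrac12\lVert \vec r - \vec s\rVert$, and plug into the paper's definition $\lVert\rho-\sigma\rVert_{\trc} = \tfrac12\sum_i\lvert\lambda_i\rvert$ — and it correctly fills the omitted detail, including the mixed-state case since you only assume $\lVert\vec r\rVert \le 1$.
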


The following lemma will then be necessary to relate trace distance of the states to the squared loss in learning under this distribution.
\begin{lemma}\label{lemma:error_product_state}
	For $n$-qubit product states $\rho = \bigotimes_i \rho_i$ and $\sigma = \bigotimes_i \sigma_i$, let $f_\rho(E) = 2\trc(E\rho) - 1$ and $f_\sigma(E) = 2\trc(E \sigma) - 1$. Let $D$ be the distribution over measurements defined earlier. Then
	\[
	\ex_{E \sim D}[(f_\rho(E) - f_\sigma(E))^2] = \frac{4}{3n}\sum_{i=1}^n \lVert \rho_i - \sigma_i\rVert_{\trc}^2
	\]
\end{lemma}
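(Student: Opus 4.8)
The plan is to exploit the fact that a measurement $E \sim D$ only ever touches a single, randomly chosen qubit, so the whole computation collapses to a one-qubit calculation carried out on the Bloch sphere. First I would write a draw from $D$ as: pick $i \in [n]$ uniformly, draw $E' \sim D'$, and set $E = I^{\otimes i-1} \otimes E' \otimes I^{\otimes n-i}$. For a product state $\rho = \bigotimes_j \rho_j$ the trace then factorizes, $\trc(E\rho) = \trc(E'\rho_i)$, since $\trc(\rho_j) = 1$ for $j \neq i$. Hence $f_\rho(E) - f_\sigma(E) = 2(\trc(E'\rho_i) - \trc(E'\sigma_i)) = 2\trc(E'(\rho_i - \sigma_i))$, and averaging over the random qubit gives
\[ \ex_{E\sim D}\big[(f_\rho(E) - f_\sigma(E))^2\big] = \frac{4}{n}\sum_{i=1}^n \ex_{E'\sim D'}\big[\big(\trc(E'(\rho_i - \sigma_i))\big)^2\big]. \]

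Next I would pass to Bloch coordinates: write $\rho_i = \tfrac12(I + \vec r\cdot\vec\sigma)$, $\sigma_i = \tfrac12(I + \vec s\cdot\vec\sigma)$, and $E' = \tfrac12(I + \vec u\cdot\vec\sigma)$, where $\vec u$ is uniform on the unit sphere $S^2$ (this is exactly the distribution of the Bloch vector of $U\ketbra{0}{0}U^\dagger$ for Haar-random $U$). Using $\trc(\vec a\cdot\vec\sigma) = 0$ and $\trc\big((\vec a\cdot\vec\sigma)(\vec b\cdot\vec\sigma)\big) = 2\,\vec a\cdot\vec b$, one gets $\trc(E'(\rho_i - \sigma_i)) = \tfrac12\,\vec u\cdot(\vec r - \vec s)$, so that the quantity inside the expectation is $\tfrac14\big(\vec u\cdot(\vec r - \vec s)\big)^2$.

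The only genuinely non-routine step is then the spherical second moment $\ex_{\vec u\sim\unif(S^2)}\big[(\vec u\cdot\vec w)^2\big]$: by rotational symmetry $\ex[u_a u_b] = \tfrac13\delta_{ab}$, hence this equals $\tfrac13\lVert\vec w\rVert_2^2$. Combining, $\ex_{E'\sim D'}\big[(\trc(E'(\rho_i-\sigma_i)))^2\big] = \tfrac1{12}\lVert\vec r - \vec s\rVert_2^2$. Finally I would invoke \cref{prop:trace}, which gives $\lVert\rho_i - \sigma_i\rVert_{\trc} = \tfrac12\lVert\vec r - \vec s\rVert_2$, i.e.\ $\lVert\vec r - \vec s\rVert_2^2 = 4\lVert\rho_i - \sigma_i\rVert_{\trc}^2$; substituting back yields $\ex_{E\sim D}[(f_\rho - f_\sigma)^2] = \frac4n\sum_i \tfrac13\lVert\rho_i-\sigma_i\rVert_{\trc}^2 = \frac{4}{3n}\sum_i \lVert\rho_i-\sigma_i\rVert_{\trc}^2$, as claimed.

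There is no real obstacle: the content is bookkeeping once the reduction to one qubit is in place. The only thing to watch is the normalization conventions — the factor $2$ in $\trc\big((\vec a\cdot\vec\sigma)(\vec b\cdot\vec\sigma)\big) = 2\,\vec a\cdot\vec b$, the factor $\tfrac13$ in the spherical average, and the factor relating $\lVert\cdot\rVert_{\trc}$ to Euclidean distance — which must all be tracked for the constant to land on $\tfrac{4}{3n}$. As an alternative to Bloch coordinates, one could extract the same one-qubit second moment directly from the spherical integral appearing in the proof of \cref{lemma:haar_random_single}, but the Bloch-sphere computation is the most transparent.
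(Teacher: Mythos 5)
Your proposal is correct and follows essentially the same route as the paper: reduce to a single randomly chosen qubit so the loss becomes $\frac{4}{n}\sum_i \ex_{E'\sim D'}[\trc(E'(\rho_i-\sigma_i))^2]$, then evaluate the single-qubit second moment over the Haar-random measurement. The only (cosmetic) difference is that you compute that moment via Bloch-vector isotropy, $\ex[(\vec u\cdot \vec w)^2]=\tfrac13\lVert \vec w\rVert^2$, together with \cref{prop:trace}, whereas the paper diagonalizes $\xi_i=\rho_i-\sigma_i$ and does the explicit spherical integral in its eigenbasis, reading off $\lambda_i=\lVert\rho_i-\sigma_i\rVert_{\trc}$ directly from the eigenvalues.
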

\begin{proof}
	Let $\xi = \rho - \sigma$. Then by linearity
	\[
	f_\rho(E) - f_\sigma(E) = 2(\trc(E \rho) - \trc(E \sigma)) = 2\trc(E\xi).
	\]
	
	We will define $\xi_i = \trc_i(\xi) = \rho_i - \sigma_i$ to be the reduced density matrix on the $i\th$ qubit of $\xi$. Noting that each $\xi_i$ is traceless, then by diagonalizing we can write $\xi_i = \lambda_i \ketbra{\lambda_i}{\lambda_i} - \lambda_i \ketbra{\lambda_i^\perp}{\lambda_i^\perp}$ for $\lambda_i \in [0, 1]$ such that $\lambda_i = \lVert \rho_i - \sigma_i \rVert_{\trc}$ is the trace distance of the reduced density matrix.
	
	Like in \cref{lemma:haar_random_single}, we can parameterize a single-qubit Haar-random projection as $E = \frac{1}{2}U(I + \cos\phi \sin \theta X + \sin \phi \sin \theta Y + \cos \theta Z)U^\dagger$, where $U\ket{0} = U\ket{\lambda_i}$ and $U\ket{1} = U\ket{\lambda_i^\perp}$. This implies that $U\xi_iU^\dagger = \lambda_i Z$.
	
	\begin{align*}
	\trc(E \xi_i)
	&=\trc\bigg(\frac{1}{2}U\big(I + \cos\phi \sin \theta X + \sin \phi \sin \theta Y + \cos \theta Z\big)U^\dagger \rho\bigg)\\
	&=\trc\bigg(\frac{1}{2}\big(I + \cos\phi \sin \theta X + \sin \phi \sin \theta Y + \cos \theta Z\big)\cdot \lambda_i Z\bigg)\\
	&= \lambda_i \cos \theta
	\end{align*}
	
	Using this, we now compute the squared-loss as follow.
	
	\begin{align*}
		\ex_{E \sim D}[(f_\rho(E) - f_\sigma(E))^2]
		&= \frac{1}{n}\sum_{i=1}^n \ex_{E' \sim D'}[(f_{\rho_i}(E') - f_{\sigma_i}(E')^2]\\
		&= \frac{4}{n}\sum_{i=1}^n \ex_{E' \sim D'}[\trc^2(E'\xi_i)]\\
		&= \frac{4}{n}\sum_{i=1}^n \frac{1}{2}\int_{0}^\pi \sin \theta \cdot \lambda_i^2 \cos^2 \theta\\
		&= \frac{4}{3n}\sum_{i=1}^n \lambda_i^2\\
		&= \frac{4}{3n}\sum_{i=1}^n \lVert \rho_i - \sigma_i\rVert_{\trc}^2
	\end{align*}
\end{proof}

We now show how to use \cref{lemma:haar_random_single} to learn each qubit of the product state, allowing us to then apply \cref{lemma:error_product_state} to get our learning result.

\begin{theorem}
	Let $D$ be the distribution on measurements and let $\cC$ be the concept class of product states defined earlier. There is an SQ learner that is able to learn $\cC$ under $D$ up to squared loss $\epsilon$ using $3n$ queries of tolerance $\sqrt{\epsilon}/n$.
\end{theorem}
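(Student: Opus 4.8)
The plan is to read off the single-qubit factors of the target product state $\rho = \bigotimes_{i=1}^n \rho_i$ one Pauli component at a time, using \cref{lemma:haar_random_single} as the key subroutine, and then to convert per-qubit accuracy into a bound on the squared loss via \cref{lemma:error_product_state}. For each $i \in [n]$ and each $P \in \{X, Y, Z\}$, write $\ketbra{\psi_P}{\psi_P} = (I+P)/2$ for the $+1$-eigenprojector of $P$ and define a statistical query $\phi_{i,P} : \E \times \{-1,1\} \to [-1,1]$ as follows. A measurement from $D$ has the form $E = I^{\otimes j-1} \otimes E' \otimes I^{\otimes n-j}$ for a coordinate $j$ and a single-qubit measurement $E'$, and since $E'$ is a rank-one projector the pair $(j, E')$ is recovered from $E$; set $\phi_{i,P}(E, Y) = \sgn\bigl(\trc(E'\ketbra{\psi_P}{\psi_P}) - \tfrac12\bigr)\, Y$ if $j = i$ and $\phi_{i,P}(E,Y) = 0$ otherwise. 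This is a valid $[-1,1]$-valued query.

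First I would compute the query expectation. Because $D$ picks $j$ uniformly and then $E' \sim D'$, and because measuring qubit $i$ of a product state only probes $\rho_i$ (so $\ex[Y \mid E] = 2\trc(E'\rho_i) - 1$ on the event $j = i$),
\[
\ex_{E \sim D,\, Y \sim f_\rho(E)}\bigl[\phi_{i,P}(E,Y)\bigr] = \frac1n\, \ex_{E' \sim D'}\Bigl[\sgn\bigl(\trc(E'\ketbra{\psi_P}{\psi_P}) - \tfrac12\bigr)\bigl(2\trc(E'\rho_i) - 1\bigr)\Bigr] = \frac{1}{2n}\trc(P\rho_i),
\]
using \cref{lemma:haar_random_single} together with the fact that $\ex_{E' \sim D'}\bigl[\sgn(\trc(E'\ketbra{\psi_P}{\psi_P}) - \tfrac12)\bigr] = 0$ by the rotational symmetry of $D'$. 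Hence a query $\phi_{i,P}$ of tolerance $\tau$ returns a number $q_{i,P}$ with $\bigl|2n\, q_{i,P} - \trc(P\rho_i)\bigr| \le 2n\tau$. Making all $3n$ such queries, the triple $(2n\,q_{i,X}, 2n\,q_{i,Y}, 2n\,q_{i,Z})$ approximates in $\ell_\infty$ (to within $2n\tau$) the Bloch vector $b^{(i)} = (\trc(X\rho_i), \trc(Y\rho_i), \trc(Z\rho_i))$ of $\rho_i$; I would project this triple onto the closed unit ball, let $\hat\rho_i$ be the single-qubit state with the resulting Bloch vector, and output $\hat\rho = \bigotimes_i \hat\rho_i$ (a valid $n$-qubit state).

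For the error analysis: since $b^{(i)}$ lies in the unit ball and Euclidean projection onto a closed convex set does not increase distances to points of that set, the projected Bloch vector is within $\ell_2$-distance $\sqrt3\,(2n\tau)$ of $b^{(i)}$, so by \cref{prop:trace}, $\|\hat\rho_i - \rho_i\|_{\trc}^2 \le \tfrac14\bigl(\sqrt3\,(2n\tau)\bigr)^2 = 3 n^2\tau^2$. Plugging into \cref{lemma:error_product_state},
\[
\ex_{E \sim D}\bigl[(f_{\hat\rho}(E) - f_\rho(E))^2\bigr] = \frac{4}{3n}\sum_{i=1}^n \|\hat\rho_i - \rho_i\|_{\trc}^2 \le \frac{4}{3n}\cdot n \cdot 3 n^2\tau^2 = 4 n^2\tau^2,
\]
so taking the tolerance $\tau$ on the order of $\sqrt\epsilon/n$ (concretely $\tau = \sqrt\epsilon/(2n)$) makes the squared loss at most $\epsilon$, giving the claimed $3n$-query learner.

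I do not expect a genuine conceptual obstacle; the care lies in the book-keeping. The points to verify are: that $\phi_{i,P}$ is truly $[-1,1]$-valued and computable from the observed pair $(E,Y)$, i.e.\ that the coordinate $j$ and the single-qubit part $E'$ can be extracted from $E$ (which holds by the explicit form of $D$); tracking the factor $1/n$ from the uniform choice of qubit and the factor relating the $\{-1,1\}$ label $Y$ to $\trc(E'\rho_i)$ (this is what forces the $2n$ rescaling and hence the $\sqrt\epsilon/n$-scale tolerance); and the passage from an $\ell_\infty$ guarantee on each estimated Bloch vector to the $\ell_2$/trace-distance guarantee that \cref{lemma:error_product_state} consumes. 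The projection onto the Bloch ball is what makes the hypothesis an honest quantum state without increasing the error, and since nothing in the argument uses purity of the $\rho_i$, the result holds for arbitrary product states.
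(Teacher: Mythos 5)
Your proposal is correct and takes essentially the same route as the paper: the same per-qubit, per-Pauli sign-correlation queries, the same use of \cref{lemma:haar_random_single} to show each query expectation equals $\frac{1}{2n}\trc(P\rho_i)$, and the same passage from Bloch-coordinate estimates through \cref{prop:trace} and \cref{lemma:error_product_state} to the squared loss, with your projection onto the Bloch ball being just a cleaner version of the paper's normalization argument. The only difference is a constant factor in the bookkeeping: your careful accounting requires tolerance $\sqrt{\epsilon}/(2n)$ to land exactly at loss $\epsilon$, whereas the paper's stated $\sqrt{\epsilon}/n$ rests on writing $x_i = \tfrac12\trc(P_1\rho_i)$, which is off by a factor of $2$ from its own convention $\rho_i = \tfrac{I + x_iX + y_iY + z_iZ}{2}$; up to this constant the two arguments coincide.
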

\begin{proof}
	Let the unknown $\rho \in \cC$ be given by $\rho = \bigotimes_i \rho_i$. If we define $P_1 = X$, $P_2 = Y$, and $P_3 = Z$, then our queries will be
	\[
	\phi_{i, j}(E, Y) = \sgn\left(\frac{1}{2^{n-1}}\trc\left(E\cdot \left(I^{\otimes i-1}\otimes \frac{I + P_j}{2} \otimes I^{\otimes n-i}\right)\right) - \frac{1}{2}\right)\cdot Y
	\] The query $\phi_{i, j}$ will correspond to taking the projection of the $i\th$ qubit along the Pauli $P_j$, as we now show:
	\begin{align*}
	\ex_{E \sim D, Y \sim f_\rho(E)} \left[\phi_{i, j}\right(E, Y\left)\right]
	&= \ex_{E \sim D} \left[\phi_{i,j}(E, 1) \trc\left(E \rho\right) + \phi_{i, j}(E, -1)\bigg(1-\trc\left(E \rho\right)\bigg)\right]\\
	&= \ex_{E \sim D} \left[\phi_{i,j}(E, 1) \bigg(2\trc\left(E \rho\right) - 1\bigg)\right]\\
	&= \frac{1}{n} \ex_{E' \sim D'}\left[\sgn\left(\trc\left(E'\ketbra{0}{0}\right) - \frac{1}{2}\right) \bigg(2\trc\left(E' \rho_i \right) - 1\bigg)\right]\\
	&= \frac{1}{2n}\trc(P_j \rho_i).
	\end{align*} Here the third equality exploits the definition of $D$ as $\frac{1}{n} \sum_{i=1}^n I^{\otimes i - 1} \otimes D' \otimes I^{\otimes n-i}$ (only the $i\th$ term yields a nonzero expectation), and the fourth equality is \cref{lemma:haar_random_single}.

	Any specific qubit $\rho_i$ can be written in Bloch sphere coordinates as $\frac{I + x_i X + y_i Y + z_i Z}{2}$. We can estimate $x_i = \frac{1}{2}\trc(P_1 \rho_i)$ up to error $\sqrt{\epsilon}$ using a single query of tolerance $\sqrt{\epsilon}/n$. The same holds true for $y_i$ and $z_i$. If we use this to construct our estimate
	\[
	\hat{\rho}_i = \frac{I + \hat{x}_i X + \hat{y}_i Y + \hat{z}_i Z}{2}
	\]
	then by \cref{prop:trace} we get \[\lVert \rho_i - \hat{\rho_i} \rVert_{\trc}^2 = \frac{1}{4}\left[(x_i - \hat{x}_i)^2 + (y_i - \hat{y}_i)^2 + (z_i - \hat{z}_i)^2\right] \leq 3\epsilon/4. \] Finally, using  \cref{lemma:error_product_state}:
	
	\[
		\ex_{E \sim D}[(f_\rho(E) - f_\sigma(E))^2] \leq \frac{4}{3n}\sum_{i=1}^n 3\epsilon/4 = \epsilon.
	\]
	
	We note that if the estimated point is outside of the Bloch sphere, we can simply normalize the point to the surface of the Bloch sphere and this will never increase the error. To quickly sketch the proof of this, take the plane formed by the center of the sphere, the estimated point $\hat{p}$ that is outside of the sphere, and the real point $p$ which is both within the Bloch sphere and within a sphere $\epsilon$ radius located at $\hat{p}$. The normalized point $\hat{p}'$ is always located on the line from the $\hat{p}$ to the origin, and one can make a separating plane that bisects the line segment between $\hat{p}$ and $\hat{p}'$ that denotes whether one is closer to $\hat{p}$ or $\hat{p}'$. Since the Bloch sphere will always be on the side closer to $\hat{p}'$ and the real point $p$ is in the Bloch sphere, $p$ will always be closer to $\hat{p}'$ than $\hat{p}$.
\end{proof}

We can simplify this algorithm if we know in advance that $\rho$ is a computational basis state. In that case, we know that each qubit $\rho_i$ is either $(I + Z)/2$ or $(I - Z)/2$, and so we only need to make $n$ queries $\phi_{i, 3}$, one for each $i$. Moreover, we only need to identify the coordinate $z_i$ to within an accuracy of $1$ in order to distinguish the $z_i = 1$ and $z_i = -1$ cases, so that our tolerance need only scale as $O(1/n)$ in order to learn $\rho$ perfectly (i.e.\ with $\epsilon = 0$).

\section{Connections to differential privacy}\label{sec:diff-priv}
A PAC learning algorithm $L$ can be viewed as a randomized algorithm that takes as input a training dataset (i.e.\ a set of labeled examples $(x, y)$ sampled from a distribution) and outputs a hypothesis that with high probability has low error over the distribution. That is, if $S$ is a training dataset, then $L(S)$ describes a probability distribution over hypotheses (where the randomness arises from the internal randomness of the learner). Intuitively, differential privacy requires $L$ to satisfy a kind of stability: on any two inputs $S$ and $S'$ that are close, the distributions $L(S)$ and $L(S')$ must be close as well.

\begin{definition}[Differential privacy, \cite{dwork2014algorithmic}]
	Call two datasets $S = \{(x_i, y_i)\}_{i=1}^m$ and $S' = \{(x_i', y_i')\}_{i=1}^m$ neighbors if they only differ in one entry. A learner $L$ (understood in the sense just discussed) is said to be $\alpha$-differentially private (or $\alpha$-DP for short) if for any $S$ and $S'$ that are neighbors, the distributions $L(S)$ and $L(S')$ are close in the sense that for any hypothesis $h$, $\pr[L(S) = h] \leq e^{\alpha} \pr[L(S') = h]$.
\end{definition}

A well-known property of SQ algorithms is that they can readily be made differentially private \cite{blum2005practical, dwork2014algorithmic}. Since differential privacy is a notion that is well-defined only in the PAC setting where the input is a set of training examples (as opposed to access to an SQ oracle), such a statement is necessarily of the form ``any SQ learner yields a PAC learner that satisfies differential privacy.''

\begin{theorem}[see e.g.~\cite{balcan2015dp}]\label{thm:sq_dp}
	Let $\cC$ be a concept class learnable up to error $\epsilon$ by an SQ learner $L$ using $q$ queries of tolerance $\tau$. Then it is also learnable up to error $\epsilon$ in the PAC setting by an $\alpha$-DP learner $L'$ with sample complexity $\tilde{O}(\frac{q}{\alpha \tau} + \frac{q}{\tau^2})$ (with constant probability).
\end{theorem}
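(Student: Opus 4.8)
The plan is the textbook reduction from SQ learning to private PAC learning: simulate each statistical query of $L$ by an empirical average over a fresh block of training examples, perturb each such average with Laplace noise to enforce privacy, and then check that with enough examples all of the perturbed answers fall within tolerance $\tau$, so that $L$ is faithfully simulated and hence outputs a hypothesis of error at most $\epsilon$.

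Concretely, I would take a training set $S$ of size $m$, partition it into $q$ disjoint blocks $S_1,\dots,S_q$ of size $m/q$, and run $L$ while answering its $i$-th query $\phi_i$ using only $S_i$. The answer returned is $\hat\mu_i + Z_i$, where $\hat\mu_i = \frac{q}{m}\sum_{(x,y)\in S_i}\phi_i(x,y)$ is the empirical estimate of $\mu_i := \ex_{(x,y)}[\phi_i(x,y)]$ and $Z_i$ is fresh Laplace noise of scale $b = \frac{2q}{m\alpha}$. Since $\phi_i$ is bounded in $[-1,1]$, changing one example of $S_i$ changes $\hat\mu_i$ by at most $2q/m$, so the standard Laplace mechanism \cite{dwork2014algorithmic} makes the $i$-th answer $\alpha$-DP with respect to $S_i$.

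The crucial observation for the privacy analysis is that the blocks are disjoint, so I get \emph{parallel} rather than sequential composition, and no factor of $q$ enters the noise. On neighboring datasets $S,S'$ differing in one example — say in block $S_j$ — the answers to queries $1,\dots,j-1$ do not depend on $S_j$ and hence have identical distributions, so $L$'s (adaptively chosen) query $\phi_j$ has the same distribution on both inputs; the answer to $\phi_j$ differs by at most a multiplicative $e^{\alpha}$ factor pointwise; and conditioned on that answer, queries $j+1,\dots,q$ proceed identically. Chaining these observations shows that the full transcript of answers, and therefore the hypothesis $L$ finally outputs, is $\alpha$-DP; thus the composite algorithm $L'$ is $\alpha$-DP.

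Finally I would choose $m$ so that every returned answer lies within $\tau$ of $\mu_i$, splitting the error budget as $\tau/2 + \tau/2$. For the noise: a single Laplace variable of scale $b$ exceeds $\tau/2$ in absolute value with probability $e^{-\tau/(2b)}$, so a union bound over the $q$ queries requires $b = O(\tau/\log(q/\delta))$, i.e.\ $m = \tilde{O}(q/(\alpha\tau))$. For the sampling error: Hoeffding's inequality on a block of $m/q$ examples gives $|\hat\mu_i - \mu_i| = O(\sqrt{(q/m)\log(q/\delta)})$, and demanding this be at most $\tau/2$ for all $i$ (again by a union bound) requires $m = \tilde{O}(q/\tau^2)$. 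Taking $m = \tilde{O}\!\big(\frac{q}{\alpha\tau}+\frac{q}{\tau^2}\big)$ and $\delta$ a suitably small constant, with constant probability all $q$ answers are within $\tau$, the simulation of $L$ is valid, and $L'$ outputs a hypothesis of error at most $\epsilon$. The only mild subtlety is that the queries $\phi_i$ are chosen adaptively, but each is answered from an independent block and is fixed before that block is inspected, so both the privacy coupling and the union bounds go through unchanged; the constant success probability can be boosted to $1-\delta$ by the usual repeat-and-validate argument. I expect the parallel-composition step to be the one that must be gotten right — it is exactly what keeps the $1/(\alpha\tau)$ term free of an extra factor of $q$ — while the concentration bounds are routine.
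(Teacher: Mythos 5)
Your proposal is correct and follows exactly the route the paper indicates (it only sketches it in one line): simulate each query with an empirical average over a fresh block of roughly $1/\tau^2$ examples and privatize via the Laplace mechanism, with disjoint blocks giving parallel composition so the sample complexity is $\tilde{O}(q/(\alpha\tau) + q/\tau^2)$. Your handling of adaptivity and the tail/union bounds is sound, so nothing is missing relative to the paper's argument.
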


The proof is standard and proceeds by simulating each of $L$'s queries using empirical estimates over a sample of size roughly $1/\tau^2$ and then using the Laplace mechanism to add some further noise.

One can extend this notion to the quantum setting. One natural and direct way of doing so is simply by replacing the classical dataset of labeled pairs $(x_i, y_i)$ by one of measurement-outcome pairs $(E_i, Y_i)$; the rest remains exactly analogous. \cref{thm:sq_dp} then carries over verbatim to our notion of quantum SQ learnability. This form of quantum differential privacy was recently studied by \citet{arunachalam2021private}, who were able to relate it to online learning, one-way communication complexity, and shadow tomography of quantum states, extending ideas of \citet{bun2020equivalence}. Since our notion of quantum SQ learnability implies quantum DP learnability, it also fits into their framework. In particular, by the chain of implications established in that work, efficient quantum SQ learnability of a class of states implies DP PAC learnability, which implies finite sequential fat-shattering (sfat) dimension, which in turn implies online learnability, gentle shadow tomography, and ``quantum stability.'' In fact, in the classical setting, some of the main examples of realistic DP learners are SQ (even though technically the inclusion is known to be strict) \cite{blum2005practical, kasiviswanathan2011can}, and one might expect the same to hold in the quantum setting as well.

We remark that a somewhat different kind of quantum differential privacy, where privacy is with respect to copies of the unknown state, may also be defined as follows. View a quantum state learner $L$ as an algorithm that takes in multiple copies $\rho^{\otimes m}$ of some unknown state $\rho$, is allowed to sample and perform random measurements from a distribution $D$, and outputs another state $\sigma$ that is required to be close to $\rho$ with respect to $D$ with high probability. If the random measurements are viewed as the internal randomness of the learner, then this is similar to the view we took of a classical learner earlier. We can now define a notion of differential privacy for quantum state learners by requiring that $L(\rho^{\otimes m})$ and $L(\rho^{\otimes m-1} \otimes \rho')$ (where $\rho \neq \rho'$, so that $\rho^{\otimes m}$ and $\rho^{\otimes m-1} \otimes \rho'$ are neighbors) are $\alpha$-close as distributions over states (in the natural way). This can also be seen as a stylized kind of tolerance to noise or corruptions. The following analogue of \cref{thm:sq_dp} can then be proven using almost exactly the same proof; essentially, we are only replacing classical examples with copies of quantum states.

\begin{theorem}
	Let $\cC$ be a class of quantum states learnable up to error $\epsilon$ by an SQ learner $L$ using $q$ queries of tolerance $\tau$. Then it is also learnable up to error $\epsilon$ in the PAC setting by an $\alpha$-DP learner $L'$ (in the specific sense just described) with copy complexity $\tilde{O}(\frac{q}{\alpha \tau} + \frac{q}{\tau^2})$ (with constant probability).
\end{theorem}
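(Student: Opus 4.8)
The plan is to follow the proof of \cref{thm:sq_dp} essentially verbatim, with copies of the unknown state $\rho$ taking over the role of classical labeled examples. Write the given SQ learner $L$ as one that makes (possibly adaptively chosen) queries $(\phi_1, \tau), \dots, (\phi_q, \tau)$, each with $\phi_j : \E \times \{-1,1\} \to [-1,1]$. First I would have $L'$ partition its $m$ copies of $\rho$ into $q$ disjoint blocks of $N := m/q$ copies, dedicating block $j$ to simulating the $j$th query: to answer query $j$, $L'$ draws $E_1, \dots, E_N \sim D$ from its own randomness (the learner is allowed to sample from $D$), measures the $i$th copy of block $j$ with $E_i$ to obtain $Y_i \sim f_\rho(E_i)$, forms the empirical mean $a_j = \frac{1}{N}\sum_{i=1}^N \phi_j(E_i, Y_i)$, and reports to $L$ the noisy value $\hat a_j = a_j + \mathrm{Lap}(2/(N\alpha))$. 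When $L$ halts, $L'$ outputs the same hypothesis $\sigma$.

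For correctness I would argue as follows. Since each $\phi_j$ is bounded by $1$, Hoeffding's inequality gives $\lvert a_j - \ex_{E \sim D,\, Y \sim f_\rho(E)}[\phi_j(E,Y)]\rvert \le \tau/2$ except with probability $2e^{-N\tau^2/8}$, and the Laplace tail gives $\lvert \mathrm{Lap}(2/(N\alpha))\rvert \le \tau/2$ except with probability $e^{-N\alpha\tau/4}$; because block $j$ is touched only by query $j$, both bounds hold conditioned on all prior randomness even though the queries are adaptive. Choosing $N = \tilde{O}(1/\tau^2 + 1/(\alpha\tau))$ and union-bounding over the $q$ queries, with constant probability every $\hat a_j$ is a legal SQ-oracle response to $(\phi_j, \tau)$, so $L$---and hence $L'$---learns $\cC$ up to error $\epsilon$; the copy complexity is $m = qN = \tilde{O}(q/\tau^2 + q/(\alpha\tau))$, as claimed.

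The privacy argument is where the (mild) quantum twist enters. If $\rho^{\otimes m}$ and $\rho^{\otimes m-1}\otimes \rho'$ are neighbors, then the single differing copy lies in exactly one block $j_0$; for $j \ne j_0$ the law of $\hat a_j$ is unchanged by the swap, and the whole run of $L'$ is a randomized post-processing of $\hat a_{j_0}$ together with the (identically distributed) data in the other blocks, so it suffices to show $\hat a_{j_0}$ is $\alpha$-DP under the swap, uniformly over the adaptively chosen $\phi_{j_0}$. Here, unlike classically, swapping $\rho$ for $\rho'$ in a single slot does not move $a_{j_0}$ by a fixed amount---the measurement outcome of that slot is itself random---so the usual worst-case sensitivity bound is unavailable. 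Instead I would couple the two executions by using the same measurements $E_1, \dots, E_N$ and the same outcomes on every unaffected copy; whatever the two outcomes on the affected copy are, the terms they contribute to $a_{j_0}$ both lie in $[-1/N, 1/N]$, so under this coupling $\lvert a_{j_0}^{(\rho)} - a_{j_0}^{(\rho')}\rvert \le 2/N$ almost surely. The Laplace-perturbed answers $\hat a_{j_0}^{(\rho)}$ and $\hat a_{j_0}^{(\rho')}$ are then $\alpha$-DP with respect to the swap, by the standard shifted-Laplace-mechanism bound (noise scale $2/(N\alpha)$, shift at most $2/N$) averaged over the coupling; post-processing and mixing over the other blocks' data then gives that $L'$ is $\alpha$-DP.

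The hard part---such as it is---is exactly the sensitivity point just described: because a per-copy contribution to $a_{j_0}$ is a random measurement outcome rather than a fixed number, one has to bound the sensitivity of the empirical mean in the sense of an optimal coupling rather than pointwise, after which the classical Laplace-mechanism and composition arguments go through unchanged. As in the classical proof, it is the disjoint-block structure that ensures a single swapped copy perturbs the distribution of only one noisy answer even for adaptive SQ queries.
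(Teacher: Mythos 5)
Your proposal is correct and matches the paper's (deliberately terse) proof, which simply instantiates the standard SQ-to-DP simulation -- empirical estimation of each query on a fresh block of examples plus the Laplace mechanism -- with copies of $\rho$ in place of classical labeled examples, exactly as you spell out. Your coupling step for the sensitivity of $a_{j_0}$ is fine, though it can be shortened: since $\lvert\phi(E,1)-\phi(E,-1)\rvert\le 2$, the $2/N$ sensitivity bound holds worst-case over whatever outcomes the swapped copy produces, so the classical Laplace argument applies verbatim.
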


Note that these notions are different from those of \cite{aaronson2019gentle}, which defined differential privacy for quantum measurements. Here two $n$-qubit states are considered neighbors if it is possible to reach one from the other by a quantum operation (sometimes called a superoperator) on a single qubit. In particular, two product states $\rho = \otimes_i \rho_i$ and $\sigma = \otimes_i \sigma_i$ are neighbors if $\rho_i = \sigma_i$ for all $i$ but one.

\begin{definition}[Quantum differential privacy for measurements, \cite{aaronson2019gentle}]
	A measurement $M$ is said to be $\alpha$-DP if for any $n$-qubit neighbor states $\rho, \sigma$, and any outcome $y$, $\pr[M(\rho) = y] \leq e^{\alpha} \pr[M(\sigma) = y]$.
\end{definition}

The authors show that this definition can be related to the notion of a ``gentle quantum measurement,'' and this connection can be carefully exploited to perform shadow tomography \cite{aaronson2019shadow}. However, this kind of quantum DP is not applicable in a natural way to a PAC or SQ learner, since such a learner is an algorithm rather than just a single measurement.

\section{Discussion and open problems}

\paragraph{Statistical vs.\ query complexity.} Conceptually, the contrast between our SQ model and the original PAC model of \cite{aaronson2007learnability} is interesting. Apart from the definition of an elegant model, Aaronson's main insight was to characterize learnability in a purely \emph{statistical} sense, showing bounds on sample complexity via an analysis of the so-called fat-shattering dimension of quantum states. In learning theoretic terms, this took advantage of a separation of concerns that the PAC model encourages: (a) empirical performance, i.e.\ a learner achieving low error with respect to the training data, and (b) generalization, i.e.\ this performance actually generalizing to the true distribution. The SQ model, however, does not naturally accommodate such a separation. SQ algorithms are instead primarily characterized by the number of queries required; generalization is ``in-built.'' The closest analogue to a notion of sample complexity is the role played by the tolerance, and the closest thing to studying generalization on its own might have been to show a phase transition in what different regimes of the tolerance are able to accomplish. The formal statements of our SQ lower bounds do have such a flavor: ``\emph{either} use small tolerance \emph{or} many queries.''

\paragraph{Suitable classes and distributions for PAC-learning.} It is notable that the algorithms of \cite{rocchetto2018stabiliser} for learning stabilizer states and \cite{yoganathan2019condition} for low Schmidt rank states are essentially the only known positive results in the framework of \cite{aaronson2007learnability}. Both these algorithms are ``algebraic'' and involve solving a system of polynomial equations, something that SQ cannot handle. A longstanding question in this area is: what other interesting classes can be learned, and under what distributions on measurements? And can they also be learned in the SQ setting?

A major issue in picking suitable distributions on measurements is that under many natural distributions, the maximally mixed state actually performs quite well, so that the problem of learning becomes essentially superfluous. Even in this work, we obtained lower bounds for learning stabilizer states under the uniform distribution on Pauli measurements only for learning up to exponentially small squared loss. This was because the norms of the p-concepts are themselves exponentially small, or in other words the maximally mixed state already achieves exponentially small loss. We were able to get around this and obtain a $\Omega(2^n)$ lower bound via a direct reduction from learning parities (by considering parity measurements). Can we do better than just $2^n$? Is there a $\omega(2^n)$-sized (e.g., $4^n$ or $2^{n^2}$) subset of stabilizer states such that there exists a distribution over Pauli measurements inducing norms that are only polynomially small yet have an exponentially small average correlation? That is, is there a $\omega(2^n)$-sized set of stabilizer states and accompanying distribution over Pauli measurements such that the maximally mixed state does not do well?

\paragraph*{Other forms of noise.} Can we extend the noise tolerance of SQ algorithms to more forms of noise, or improve the parameters of the noise tolerated? One such interesting form of noise would be depolarizing noise that acts on individual qubits (as opposed to acting directly on the whole state).

\paragraph*{Noise-tolerant learning beyond SQ.} The best-known PAC algorithm for learning parities with noise is due to \cite{blum2003noise} and runs in slightly subexponential time. Interestingly, this means it beats the exponential SQ lower bound and is hence essentially the only known example of a noise-tolerant PAC algorithm that is not SQ (although it cannot handle noise arbitrarily close to the information-theoretic limit). Can we similarly hope for a noise-tolerant but non-SQ learner for stabilizer states that runs in subexponential time?

\section*{Acknowledgements}
We thank Scott Aaronson, Srinivasan Arunachalam, and Andrea Rocchetto for many helpful discussions. AG was supported by NSF awards AF-1909204, AF-1717896, and the NSF AI Institute for Foundations of Machine Learning (IFML). DL was supported by the Simons It from Qubit Collaboration and Scott Aaronson's Vannevar Bush Faculty Fellowship from the US Department of Defense.

\bibliographystyle{plainnat}
\bibliography{../refs-zotero}

\end{document}